\theoremstyle{definition}
\newtheorem{theorem}{Theorem}
\newtheorem{definition}{Definition}[section]
\newtheorem{corollary}{Corollary}[theorem]
\newcommand{\Var}{\mathrm{Var}}
\newcommand{\nb}{Na\"ive Bayes }
\begin{document}

  \author*[1]{Farzad Zafarani}

 \author[2]{Chris Clifton}

  \affil[1]{Department of Computer Science, Purdue University, USA., E-mail: farzad@purdue.edu}

  \affil[2]{Department of Computer Science, Purdue University, USA., E-mail: clifton@cs.purdue.edu}

  \title{\huge Differentially Private \nb Classifier Using Smooth Sensitivity}

  \runningtitle{Differentially Private \nb Classifier Using Smooth Sensitivity}


  \begin{abstract}
{
There is increasing awareness of the need to protect individual privacy in the training data used to develop machine learning models.
Differential Privacy is a strong concept of protecting individuals.  \nb is a popular machine learning algorithm, used as a baseline for many tasks. In this work, we have provided a differentially private \nb classifier that adds noise proportional to the \emph{smooth sensitivity} of its parameters. We compare our results to Vaidya, Shafiq, Basu, and Hong \cite{vaidya2013differentially} which scales noise to the global sensitivity of the parameters. Our experimental results on real-world datasets show that smooth sensitivity significantly improves accuracy while still guaranteeing $\varepsilon$-differential privacy.}
\end{abstract}
  \keywords{differential privacy, \nb classifier, privacy, data mining, smooth sensitivity}

  \journalname{Proceedings on Privacy Enhancing Technologies}
\DOI{Editor to enter DOI}
  \startpage{1}
  \received{..}
  \revised{..}
  \accepted{..}

  \journalyear{..}
  \journalvolume{..}
  \journalissue{..}

\maketitle
\section{Introduction}
With the growth of user data across the internet, it has become more important to protect users' sensitive information. One solution to this problem is privacy-preserving data analysis, providing ability to share information while protecting users' data. Dwork, McSherry, Nissim, and Smith \cite{dwork2006calibrating} introduced differential privacy, providing a strong privacy guarantee for statistical data release. At a high level, Differential Privacy guarantees that the outcome of a differentially private algorithm would be similar no matter if a particular individual contributes personal data to the database or not.
There are several common approaches to differential privacy, including the Laplace Mechanism, which perturbs the parameters of the model with noise that is drawn from the Laplace distribution, scaled to the impact of a single individual on the result. The exponential Mechanism is another important mechanism to guarantee $(\varepsilon,\delta)-$differential privacy \cite{mcsherry2007mechanism}.

A model generated by a machine learning algorithm, when trained on a dataset, can reveal information about the training dataset. There are a series of recent works that guarantee that the output of a machine learning model satisfies differential privacy. These include differentially private Decision Trees \cite{jagannathan2009practical}, SVM \cite{rubinstein2009learning}, Deep Neural Networks \cite{abadi2016deep}, and Logistic Regression \cite{chaudhuri2009privacy}.
Differential privacy is particularly relevant for ensuring that machine learning models do not disclose individual information and even has the promise of improving generalization \cite{DPDataReuse}.
\nb is a baseline for many classification tasks.
Vaidya, Shafiq, Basu, and Hong \cite{vaidya2013differentially} provided a differentially private algorithm for the \nb classifier. They use the Laplace Mechanism to provide this guarantee based on computing the global sensitivity of the parameters. One of the main drawbacks of using global sensitivity is that the amount of the noise added to the output can be high if their could be a dataset where an individual would have a large impact on an outcome. Nissim, Raskhodnikova, and Smith \cite{nissim2007smooth} provided a general solution for this problem. In their paper they compute the \textit{Smooth Sensitivity} for a given function $f$ building on the definition of \textit{local sensitivity}. The local sensitivity of $f$ is the maximum amount of change in $f$ if we change a single element in a \emph{particular} dataset $x$.
It is obvious that the local sensitivity of a given function is not greater than its global sensitivity. Ideally, we would like to add noise proportional to the local sensitivity of $f$, but this does not satisfy the definition of differential privacy (the amount of noise needed reveals too much about the data), hence, in \cite{nissim2007smooth} they compute a $\beta-$smooth function which is the smallest upper bound for the local sensitivity that provides $\varepsilon$-differential privacy.  In this paper, we show how this approach can be used to provide a $\varepsilon$-differentially private algorithm for the \nb classifier based on the smooth sensitivity of the parameters of the model.

Bun and Steinke \cite{bun2019average}also  provide an algorithm for estimating the mean of a distribution using i.i.d. sample $x$ using smooth sensitivity. They first assume a crude bound on the $\mu \in [a,b]$, then they truncate the samples, i.e. they remove the largest $m$ samples and smallest $m$ samples from $x$, and compute the mean of the $n-2m$ samples. Finally they project the estimated mean to the range $[a,b]$. We compare our algorithm to Vaidya, Shafiq, Basu, and Hong \cite{vaidya2013differentially} and \nb using Bun and Steinke \cite{bun2019average}'s mean estimation algorithm.

\section{Preliminaries}
We first give a brief overview of the \nb classifier, then provide an overview of differential privacy. More specifically, we state the definitions for $\varepsilon-$differential privacy and smooth sensitivity. 

\subsection{\nb Classifier}\label{nb_section}
The \nb classifier is a family of probabilistic classifiers that uses Bayes' theorem and assumes independence between features. That is, it assumes that the value of a particular feature is unrelated to any other features.
The \nb classifier can handle an arbitrary number of independent variables, whether continuous or categorical, and classifies an instance to one of a finite number of classes.

To train the \nb model, a set of training examples with a corresponding target label is provided. The task is to assign a new class $c_{MAP}$ to an unseen instance $X = <X_1, X_2, \ldots, X_m>$.
Thus, the learning task would be that for each instance $X= <X_1,X_2,\ldots,X_m>$ consists of $m$ features, it assigns probability 
$$\Pr(c_j | X_1,X_2,\ldots,X_m)$$ for each of $K$ possible label classes $C = \{c_1, c_2, \ldots, c_K\}$.

$$c_{MAP} = \arg \max\limits_{j \in \{1, \ldots, K\}} \Pr(c_j | X_1, X_2, \ldots, X_m)$$
By using Bayes' theorem, we can further decompose the conditional probability to:
\begin{align*}
c_{MAP} &= \arg\max\limits_{j \in \{1, \ldots, K\}} \left ( \frac{\Pr(X_1, X_2, \ldots, X_m|c_j)\Pr(c_j)}{\Pr(X_1, X_2, \ldots, X_m)} \right ) \\
&= \arg\max\limits_{j \in \{1, \ldots, K\}} (\Pr(X_1, X_2, \ldots, X_m|c_j)\Pr(c_j))
\end{align*}

The \nb classifier makes the further simplifying assumption that the attribute values are conditionally independent, given the target value. Therefore:
$$c_{NB} = \arg \max\limits_{j \in \{1, \ldots, K\}} \Pr(c_j) \prod\limits_{i=1}^{m}  \Pr(X_i | c_j)$$
where $c_{NB}$ denotes the final class label for the instance $X = <X_1,X_2,\ldots,X_m>$. 

From the training dataset, we can pre-compute the conditional probabilities $\Pr(X_i | c_j)$. Also, $\Pr(c_j)$ can be computed by counting the number of items that are labeled $c_j$ in the training dataset. As with Vaidya, Shafiq, Basu, and Hong's work \cite{vaidya2013differentially}, we deal with both categorical and numerical attributes. The way that we estimate the probability is different for each class:
\begin{itemize}
    \item {\em{Categorical Value}}: For a categorical attribute $X_i$ with $J$ possible attribute values $a_1,a_2, \ldots,a_J$, the probability $\Pr(X_i = a_k | c_j) = \frac{\tau\{X_i = a_k \land C=c_j\}}{\tau\{C=c_j\}}$, 
    where the $\tau\{x\}$ operator returns the number of elements in the training set $D$ that satisfy property $x$. To prevent division by zero, we  use \textit{Laplace smoothing} which adds 1 to all counts.
    \item {\em{Numerical Value}}: For a numerical attribute $X_i$, one standard approach is to assume that for each possible discrete value $c_k$ of $C$, the distribution of each continuous $X_i$ is Gaussian, and is defined by a mean and standard deviation specific to $X_i$ and $c_k$ \cite{mitchell1997machine}.
    To train such a \nb classifier we must therefore estimate the mean and standard deviation of these Gaussians,
    $$\mu_{ik} = E[X_i|C=c_k]$$
    $$\sigma^2_{ik}=E[(X_i-\mu_{ik})^2|C=c_k]$$ ,
    for each numerical attribute $X_i$ and each possible value $c_k$ of $C$.
    
    If the numerical values are bounded, one can use the Truncated normal distribution and estimate its parameters. The probability density function of the Truncated normal distribution for $a \leq x \leq b$ is: 
    $$f(x; \mu, \sigma, a, b) = \frac{1}{\sigma}\frac{\phi(\frac{x-\mu}{\sigma})}{\phi(\frac{b-\mu}{\sigma}) - \phi(\frac{a-\mu}{\sigma})}$$
    Where
    $$\phi(\tau) = \frac{1}{\sqrt{2\pi}} exp(-\frac{1}{2}\tau^2)$$
    After estimating the values for mean and variance, the probability that an instance is of class $C_j$ can be directly computed from the density function.
    
\end{itemize}

\subsection{Differential Privacy}
Dwork,  McSherry, Nissim, and Smith \cite{dwork2006calibrating} defined the notion of differential privacy. At a high level, differential privacy guarantees that if your data is a part of a database from which we release information, then the released information will be similar if your data is a part of the database or not. That is, your data will have a negligible impact on the released information. Hence, no meaningful information can be inferred about individuals. The definitions below come from their work.

\begin{definition}{(Laplace Distribution) The probability density function (p.d.f) of the Laplace distribution $Lap(\mu, \lambda)$ is $f(x|\mu, \lambda) = \frac{1}{2\lambda} e^{-|x-\mu|/\lambda}$ with mean $\mu$ and standard deviation  $\sqrt{2}\lambda$.}
\end{definition}

\begin{definition} {(Cauchy Distribution) The probability density function (p.d.f) of the Cauchy distribution $Cauchy(x_0, \lambda)$ is $f(x|x_0, \lambda) = \frac{1}{\lambda\pi(1 + ((x-x_0)/\lambda)^2)}$ with location parameter $x_0$ and scale $\lambda$.}
\end{definition}

Let $\mathcal{D}_1, \ldots, \mathcal{D}_m$ denote domains, each of which could be categorical or numerical. A database $D$ consists of $n$ rows, $\{X^{(1)}, X^{(2)}, \ldots, X^{(n)}\}$, where each $X^{(i)}\in \mathcal{D}_1 \times \ldots \times \mathcal{D}_m$.

We say two databases $D_1$ and $D_2$ are at distance $k$ of each other and we write it as $d(D_1, D_2)=k$ if they differ by $k$ rows. Two database $D_1$ and $D_2$ are called \textit{neighbors} if $d(D_1, D_2)=1$.

\begin{definition}{
(Global Sensitivity). For $f: \mathcal{D} \rightarrow \mathcal{R}$, the global sensitivity of $f$ with respect to $\ell_1$ metric is:
$$GS_f = \max\limits_{x, y: d(x,y)=1} ||f(x) - f(y)||_1$$}
\end{definition}

\begin{definition}{
\label{def:dp}
(Differential Privacy). A randomized Mechanism $\mathcal{M}:\mathcal{D}\rightarrow\mathcal{R}$ with domain $\mathcal{D}$ and range $\mathcal{R}$ is $\varepsilon$-differentially private if for all $D_1, D_2 \in \mathcal{D}$ satisfying $d(D_1,D_2)=1$, and for all sets $\mathcal{S} \subseteq \mathcal{R}$ of possible outputs:
$$\Pr[\mathcal{M}(D_1) \in \mathcal{S}] \leq e^{\varepsilon} \Pr[\mathcal{M}(D_2) \in \mathcal{S}] $$}
\end{definition}
We also introduce \emph{concentrated differential privacy}; while we do not require this, our comparison with \cite{bun2019average} involves situations where \cite{bun2019average} satisfies concentrated differential privacy rather than \cref{def:dp}.
\begin{definition}\label{def:cdp}{
(Concentrated Differential Privacy). A randomized Mechanism $\mathcal{M}:\mathcal{D}\rightarrow\mathcal{R}$ with domain $\mathcal{D}$ and range $\mathcal{R}$ is $\frac{1}{2}\varepsilon^2$-CDP if for all $D_1, D_2 \in \mathcal{D}$ satisfying $d(D_1,D_2)=1$:
$$\forall \alpha > 1 \quad D_\alpha(\mathcal{M}||\mathcal{M}(x')) \leq \frac{1}{2}\varepsilon^2\alpha $$
where $D_\alpha$ is the R\'enyi divergence and $\alpha$ is a divergence parameter.}
\end{definition}
We also make use of a couple of properties of the way differentially private mechanisms combine.
\emph{Sequential composition} states that privacy loss
is additive:  If we take ``multiple looks'' at the data,
the privacy budget $\varepsilon$ expended is the sum of the privacy budgets $\varepsilon_i$ of each ``look''.
\begin{theorem}\label{thm_comp} (Sequential Composition \cite{mcsherry2009differentially}\cite{dwork2009differential}) Let $\mathcal{M}_1 : \mathcal{D} \rightarrow \mathcal{R}_1$ be an $\varepsilon_1$-differentially private algorithm, and let $\mathcal{M}_2 : \mathcal{D} \rightarrow \mathcal{R}_2$ be an $\varepsilon_2$-differentially private algorithm. Then their combination, defined to be $\mathcal{M}_{1,2} : \mathcal{D} \rightarrow \mathcal{R}_1 \times \mathcal{R}_2$ by the mapping: $\mathcal{M}_{1,2}(x) = (\mathcal{M}_1(x),\mathcal{M}_2(x))$ is $\varepsilon_1+\varepsilon_2$-differentially private.
\end{theorem}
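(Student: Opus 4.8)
The plan is to fix a pair of neighboring databases and reduce the joint probability statement to a product of the two single-mechanism guarantees. First I would fix arbitrary $D_1, D_2 \in \mathcal{D}$ with $d(D_1, D_2) = 1$ and an arbitrary measurable set $\mathcal{S} \subseteq \mathcal{R}_1 \times \mathcal{R}_2$. The crucial structural fact is that $\mathcal{M}_1$ and $\mathcal{M}_2$ draw their randomness independently, so the output distribution of $\mathcal{M}_{1,2}(D)$ factors as the product of the two individual output distributions. Writing $p_i^{D}(\cdot)$ for the density (or probability mass) of $\mathcal{M}_i(D)$, I would express
$$\Pr[\mathcal{M}_{1,2}(D_1) \in \mathcal{S}] = \int_{\mathcal{S}} p_1^{D_1}(r_1)\, p_2^{D_1}(r_2)\, d(r_1, r_2).$$

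Next I would invoke a pointwise form of \cref{def:dp}: the set-based guarantee that $\mathcal{M}_i$ is $\varepsilon_i$-differentially private implies that $p_i^{D_1}(r) \leq e^{\varepsilon_i}\, p_i^{D_2}(r)$ for (almost) every output $r$. Applying this to each factor gives
$$p_1^{D_1}(r_1)\, p_2^{D_1}(r_2) \leq e^{\varepsilon_1 + \varepsilon_2}\, p_1^{D_2}(r_1)\, p_2^{D_2}(r_2)$$
pointwise on $\mathcal{S}$. Integrating this inequality over $\mathcal{S}$ and using independence again to recombine the right-hand side into the joint density of $\mathcal{M}_{1,2}(D_2)$ yields $\Pr[\mathcal{M}_{1,2}(D_1) \in \mathcal{S}] \leq e^{\varepsilon_1 + \varepsilon_2}\, \Pr[\mathcal{M}_{1,2}(D_2) \in \mathcal{S}]$, which is exactly the claim.

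The main obstacle I anticipate is justifying the passage from the set-based definition of differential privacy to the pointwise density ratio, together with handling an arbitrary (non-product) measurable set $\mathcal{S}$. For discrete ranges this is routine --- the integral becomes a sum over points $(a,b)$ and the factorization is immediate --- but for continuous ranges one must argue that the set guarantee forces the density ratio bound almost everywhere (otherwise the set on which the ratio is violated would itself contradict $\varepsilon_i$-DP), and then appeal to Fubini's theorem to integrate the product bound over a general $\mathcal{S}$. A clean alternative that sidesteps the density machinery is to establish the bound first for product sets $\mathcal{S} = \mathcal{S}_1 \times \mathcal{S}_2$, where the factorization and the two DP guarantees apply directly, and then extend to arbitrary $\mathcal{S}$ by the standard argument from a generating $\pi$-system to the full product $\sigma$-algebra.
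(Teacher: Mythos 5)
The paper does not actually prove \cref{thm_comp}; it is imported as a known result from \cite{mcsherry2009differentially} and \cite{dwork2009differential}, so there is no in-paper argument to compare against. Your main line of argument is the standard proof and is correct: the two mechanisms use independent randomness, so the joint output law factors; the set-based guarantee of \cref{def:dp} upgrades to an almost-everywhere bound $p_i^{D_1}(r) \leq e^{\varepsilon_i} p_i^{D_2}(r)$ on Radon--Nikodym derivatives (take densities with respect to a common dominating measure; absolute continuity of the law under $D_1$ with respect to the law under $D_2$ is itself forced by $\varepsilon_i$-DP, since any null set for $\mathcal{M}_i(D_2)$ must have probability at most $e^{\varepsilon_i}\cdot 0$ under $\mathcal{M}_i(D_1)$); multiplying the two pointwise bounds and integrating over an arbitrary measurable $\mathcal{S}$ gives the claim. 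This is exactly the argument in the cited sources.

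One caution: the ``clean alternative'' you sketch at the end --- proving the bound for product sets $\mathcal{S}_1\times\mathcal{S}_2$ and extending to all of the product $\sigma$-algebra by a $\pi$-system/Dynkin argument --- does not go through. Dynkin's lemma extends \emph{equalities} of measures from a generating $\pi$-system, because the class of sets where two measures agree is a $\lambda$-system. The class of sets satisfying the one-sided inequality $\Pr[\mathcal{M}_{1,2}(D_1)\in\mathcal{S}] \leq e^{\varepsilon_1+\varepsilon_2}\Pr[\mathcal{M}_{1,2}(D_2)\in\mathcal{S}]$ is \emph{not} a $\lambda$-system: for $A\subseteq B$ both satisfying the inequality, the bound for $B\setminus A$ would require subtracting the inequality for $A$ from the one for $B$, which reverses its direction on the subtracted term. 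So you cannot sidestep the density machinery this way; the pointwise (max-divergence) formulation is the right tool, and you should keep the density argument as the actual proof.
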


Dwork et al. \cite{dwork2006calibrating} showed how to calibrate the noise to the global sensitivity of the function $f$ such that it satisfies $\varepsilon-$differential privacy . In their work, they have shown that the magnitude of the noise is proportional to $Lap(0, GS_f/\varepsilon)$. Intuitively, whenever we add noise proportional to the global sensitivity of $f$, we are adding noise proportional to the maximum magnitude of changes in $f$.
\section{Smooth Sensitivity}
One drawback of computing the global sensitivity of $f$ is that for many functions, there may be some possible datasets where changing one individual can make a dramatic change in the outcome.  For example, suppose we are computing median on a value ranging from $0..1$.  The dataset consisting of individuals with values 0, 0, and 1 has median 0, but by changing one individual the median goes to 1 - so the added noise must essentially obscure the entire result.

In practice, most databases will not have this property.
Nissim, Raskhodnikova, and Smith \cite{nissim2007smooth} showed that we can add noise based on the actual dataset we have rather than a worst-case dataset, and still satisfy $\varepsilon$-differential privacy.  We now outline their result.
 
\begin{definition}
(Local Sensitivity). For $f: \mathcal{D} \rightarrow \mathcal{R}$ and $x\in \mathcal{D}$ the local sensitivity of $f$ at $x$ (with respect to the $\ell_1$ metric) is:
$$LS_f(x) = \max\limits_{y:d(x,y)=1}||f(x) - f(y)||_1$$
\end{definition}

Note that $GS_f = \max\limits_{x}~LS_{f}(x)$.
We would like to be able to add noise proportional to local sensitivity. However, the local sensitivity may itself be high sensitivity, i.e., noise magnitude may compromise privacy. Nissim, Raskhodnikova, and Smith define a \textit{Smooth bound} that addresses this issue by looking not just at neighbors of the current dataset, but also their neighbors, etc.

\begin{definition} \label{def:smoothbound} 
(A Smooth bound). For $\beta > 0$, a function $S: \mathcal{D} \rightarrow \mathbb{R}^{+}$ is a $\beta-$smooth upper bound on the local sensitivity of $f$ if it satisfies the following requirements:
\begin{itemize}
\item \makebox[4cm]{$\forall x \in \mathcal{D}:$\hfill}  $S(x) \geq LS_f(x)$
\item \makebox[4cm]{$\forall x, y \in \mathcal{D}, d(x,y)=1 :$\hfill}  $S(x) \leq e^\beta S(y)$
\end{itemize}
\end{definition}

\begin{definition}
(Smooth Sensitivity). For $\beta > 0$, the $\beta$-smooth sensitivity of $f$ is:
$$S^*_{f, \beta}(x) = \max\limits_{y \in \mathcal{D}}(LS_{f}(y)\cdot e^{-\beta d(x,y)})$$
\end{definition}

The smooth sensitivity $S^*_{f, \beta}(x)$ is the smallest function that satisfies \cref{def:smoothbound}.

 Nissim et al. \cite{nissim2007smooth} showed that one could do much better than scaling the noise to the global sensitivity of $f$ by adding noise proportional to the \textit{Smooth sensitivity} of $f$ where it will give much higher output accuracy.

 \subsection{Computing Smooth Sensitivity}
We now describe how to compute the smooth sensitivity of a function.
\begin{definition}\label{def:smoothdist}
The sensitivity of $f$ at distance $k$ is:
$$A^{k}(x) = \max\limits_{y\in \mathcal{D} : d(x,y) \leq k} LS_f(y)$$
\end{definition}
We can express the smooth sensitivity of $f$ in terms of $A^k$ as follows:
$$S_{f,\varepsilon}^*(x) = \max_{k=0, 1, \ldots, n}e^{-k \varepsilon}(\max\limits_{y:d(x,y)=k}LS_{f}(y))$$
$$=\max\limits_{k=0,1\ldots, n}e^{-k \varepsilon}A^{(k)}(x)$$

\subsection{Calibrating Noise to the Smooth Sensitivity}
To release a function $f$ of the database $D$, the curator computes $f$ and publishes $\mathcal{M}(D) = f(D) + \lambda Z$ where $Z$ is a random variable drawn from a noise distribution, and $\lambda$ is the scaling parameter.

\begin{theorem}\label{thm:cauchy}
(Nissim et al. \cite{nissim2007smooth}) Let $f: \mathcal{D} \rightarrow \mathbb{R}$ be any real-valued function and let $S: \mathcal{D} \rightarrow \mathbb{R}$ be a $\beta$-smooth upper bound on the local sensitivity of $f$. Then we have:
\begin{quote}
    if $\beta \leq \frac{\varepsilon}{2(\gamma + 1)}$ and $\gamma > 1$, the algorithm $x \rightarrow f(x) + \frac{2(\gamma+1)S(x)}{\varepsilon}\eta$ where $\eta$ is sampled with distribution $h(z) \approx \frac{1}{1+|z|^\gamma}$, is $\varepsilon$-differentially private.
\end{quote}
\end{theorem}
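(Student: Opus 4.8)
The plan is to verify \cref{def:dp} directly by bounding, for every pair of neighbors $x,y$ with $d(x,y)=1$ and every output value $t$, the ratio of the two output densities. Writing $\lambda(z)=\frac{2(\gamma+1)S(z)}{\varepsilon}$ for the noise scale at database $z$, the mechanism has density $p_z(t)=\frac{1}{\lambda(z)}h\!\left(\frac{t-f(z)}{\lambda(z)}\right)$, where $h(z)\propto\frac{1}{1+|z|^\gamma}$ is normalizable precisely because $\gamma>1$. Since $\Pr[\mathcal{M}(z)\in\mathcal{S}]=\int_{\mathcal{S}}p_z(t)\,dt$, it suffices to establish the pointwise bound $p_x(t)/p_y(t)\le e^{\varepsilon}$ for all $t$; integrating over any $\mathcal{S}\subseteq\mathcal{R}$ then yields the definition.

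First I would separate the two ways in which $p_x$ and $p_y$ differ --- their centers $f(x),f(y)$ and their scales $\lambda(x),\lambda(y)$ --- by inserting the intermediate density that uses the new center $f(x)$ with the old scale $\lambda(y)$:
\[
\frac{p_x(t)}{p_y(t)}
=\underbrace{\frac{h\!\left(\frac{t-f(x)}{\lambda(y)}\right)}{h\!\left(\frac{t-f(y)}{\lambda(y)}\right)}}_{\text{shift}}
\cdot\underbrace{\frac{\lambda(y)}{\lambda(x)}\cdot\frac{h\!\left(\frac{t-f(x)}{\lambda(x)}\right)}{h\!\left(\frac{t-f(x)}{\lambda(y)}\right)}}_{\text{dilation}}.
\]
The strategy is then to show each factor is at most $e^{\varepsilon/2}$, so that the product is at most $e^{\varepsilon}$; the constant $2(\gamma+1)$ built into $\lambda$ is exactly what balances the two contributions.

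For the shift factor, substituting $u=\frac{t-f(y)}{\lambda(y)}$ and $\delta=\frac{f(x)-f(y)}{\lambda(y)}$ turns it into $h(u-\delta)/h(u)$. Because $|f(x)-f(y)|\le LS_f(y)\le S(y)$, the normalized shift obeys $|\delta|\le\frac{\varepsilon}{2(\gamma+1)}$; bounding $\ln\frac{h(u-\delta)}{h(u)}\le|\delta|\sup_z\left|\frac{d}{dz}\ln(1+|z|^\gamma)\right|$ and noting that $\frac{\gamma|z|^{\gamma-1}}{1+|z|^\gamma}\le\gamma$ gives shift $\le e^{\gamma\varepsilon/(2(\gamma+1))}\le e^{\varepsilon/2}$. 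For the dilation factor, substituting $v=\frac{t-f(x)}{\lambda(y)}$ and $s=\frac{\lambda(y)}{\lambda(x)}=\frac{S(y)}{S(x)}$ turns it into $s\cdot h(sv)/h(v)$, and the $\beta$-smoothness of $S$ gives $|\ln s|\le\beta\le\frac{\varepsilon}{2(\gamma+1)}$. Here I would maximize $\ln s+\ln(1+|v|^\gamma)-\ln(1+|sv|^\gamma)$ over $v$ at fixed $s$; its $v$-derivative has the sign of $1-s^\gamma$, so the supremum is attained at $v=0$ when $s>1$ and as $v\to\infty$ when $s<1$, in both cases bounded by $|\ln s|\,(\gamma-1)\le\frac{(\gamma-1)\varepsilon}{2(\gamma+1)}<\frac{\varepsilon}{2}$, whence dilation $\le e^{\varepsilon/2}$.

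The hard part is the two one-dimensional estimates --- the shift and dilation lemmas for this particular heavy-tailed density --- and of the two the dilation bound is the crux. It is precisely here that a polynomial tail is essential: rescaling $h$ by a factor $s$ perturbs its log-density by only $O(|\ln s|)$ uniformly in the argument, whereas for exponentially-tailed noise such as Laplace or Gaussian the rescaled log-density ratio grows without bound in the tails, so no smooth-sensitivity calibration of this form would be possible. Once both lemmas are in hand, multiplying the two $e^{\varepsilon/2}$ bounds closes the argument.
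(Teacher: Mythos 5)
Your argument is essentially the one the paper imports from Nissim et al.: the paper states this theorem without its own proof and instead supplies the sliding and dilation properties of an $(\alpha,\beta)$-admissible distribution together with \cref{noise_theorem}, and your ``shift'' and ``dilation'' factors are exactly those two properties made explicit as a pointwise density-ratio bound, so the approach matches. The one slip is in the dilation step: for $s>1$ the supremum of $\ln s+\ln(1+|v|^\gamma)-\ln(1+|sv|^\gamma)$ is attained at $v=0$ and equals $\ln s$, not $(\gamma-1)|\ln s|$ (these differ when $1<\gamma<2$); the conclusion survives because the correct uniform bound is $\max(1,\gamma-1)\,|\ln s|\le\max(1,\gamma-1)\,\frac{\varepsilon}{2(\gamma+1)}<\frac{\varepsilon}{2}$ in either case.
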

Nissim et al. \cite{nissim2007smooth} showed that to scale the noise to the smooth sensitivity of $f$, it is sufficient to sample from an \textit{admissible noise distribution}, defined as follows.

For a subset $\mathcal{S}$ of $\mathbb{R}^d$, we write $\mathcal{S} + \xi$ for the set $\{z + \xi | z \in \mathcal{S} \}$, and $\varepsilon^\delta.\mathcal{S}$ for the set $\{\varepsilon^\lambda.z|z\in \mathcal{S} \}$. We also write $a \pm b$ for the interval $[a - b, a + b]$.
\begin{definition}{ (Nissim et al. \cite{nissim2007smooth})
A probability distribution on $\mathbb{R}^n$, given by a density function $h$, is $(\alpha, \beta)$-admissible (with respect to $\ell_1$), if for $\alpha = \alpha(\varepsilon, \delta)$, $\beta=\beta(\varepsilon, \delta)$, the following two conditions hold for all $\delta \in \mathbb{R}^n$ and $\lambda \in R$ satisfying $||\xi||_1 \leq \alpha$ and $|\lambda| \leq \beta$, and for all measurable subsets $\mathcal{S} \subset \mathbb{R}^n$:
\begin{itemize}
    \item Sliding Property: $$\Pr\limits_{Z\sim h}[Z \in \mathcal{S}] \leq e^{\frac{\varepsilon}{2}}.\Pr\limits_{Z \sim h}[Z\in \mathcal{S} + \xi] + \delta/2$$
    \item Dilation Property: $$\Pr\limits_{Z \sim h}[Z \in \mathcal{S}] \leq e^{\varepsilon/2}.\Pr\limits_{Z \sim h}[Z \in e^{\lambda}.\mathcal{S}] + \delta / 2$$
\end{itemize}
}
\end{definition}

\begin{theorem}\label{noise_theorem}
(Nissim et al. \cite{nissim2007smooth}) For any $\gamma > 1$, the distribution with density $h(z) \approx \frac{1}{1 + |z|^\gamma}$ is $(\frac{\varepsilon}{2(\gamma + 1)}, \frac{\varepsilon}{2(\gamma + 1)})-$admissible. Moreover, the $n-$dimensional product of independent copies of $h$ is $(\frac{\varepsilon}{2(\gamma + 1)}, \frac{\varepsilon}{2(\gamma  + 1)})$ admissible.
\end{theorem}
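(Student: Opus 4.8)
The plan is to verify the two defining conditions of admissibility --- the Sliding and Dilation properties --- first for the scalar density $h(z)\propto(1+|z|^\gamma)^{-1}$ and then for the $n$-fold product. For both properties it suffices to control a \emph{pointwise} likelihood ratio: after the change of variables $w=z-\xi$ (sliding) and $z=e^{\lambda}w$ (dilation), each inequality between probabilities over an arbitrary measurable $\mathcal{S}$ reduces to comparing the two densities at matched points. Concretely, the Sliding property follows from $h(w)/h(w+\xi)\le e^{\varepsilon/2}$ whenever $|\xi|\le\alpha$, and the Dilation property from $h(w)/\bigl(e^{\lambda}h(e^{\lambda}w)\bigr)\le e^{\varepsilon/2}$ whenever $|\lambda|\le\beta$; a pointwise bound of this form gives $\delta=0$.

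For the scalar case I would dispatch both ratios by elementary calculus. For sliding I would bound $\log h$ through its Lipschitz constant: $|(\log h)'(z)|=\gamma|z|^{\gamma-1}/(1+|z|^\gamma)$, whose supremum (attained at $|z|^\gamma=\gamma-1$) is at most $\gamma+1$, so $|\log h(w)-\log h(w+\xi)|\le(\gamma+1)|\xi|\le\varepsilon/2$ exactly when $\alpha=\varepsilon/(2(\gamma+1))$. For dilation I would write the ratio as $e^{-\lambda}(1+e^{\gamma\lambda}u)/(1+u)$ with $u=|w|^\gamma\ge 0$, note that it is monotone in $u$, and evaluate the extremes at $u=0$ and $u\to\infty$; these are $e^{-\lambda}$ and $e^{(\gamma-1)\lambda}$, so the ratio never exceeds $\max(e^{|\lambda|},e^{(\gamma-1)|\lambda|})\le e^{\varepsilon/2}$ for $|\lambda|\le\varepsilon/(2(\gamma+1))$. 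This settles the one-dimensional statement with $\alpha=\beta=\varepsilon/(2(\gamma+1))$ and $\delta=0$.

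Passing to the product $h_n(z)=\prod_i h(z_i)$, the Sliding property extends cleanly. The log-likelihood ratio factorizes as $\sum_i\bigl(\log h(w_i)-\log h(w_i+\xi_i)\bigr)$, and the same Lipschitz bound gives $\le(\gamma+1)\sum_i|\xi_i|=(\gamma+1)\|\xi\|_1\le\varepsilon/2$. Thus the $\ell_1$ constraint $\|\xi\|_1\le\alpha$ is exactly what is needed, and the constant does not degrade with the dimension $n$.

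The hard part will be the Dilation property for the product. Here the log-ratio is $\sum_i\log\bigl(e^{-\lambda}(1+e^{\gamma\lambda}u_i)/(1+u_i)\bigr)$, and since each summand can independently approach its maximum $(\gamma-1)|\lambda|$ when the corresponding coordinate sits in the tail, the coordinatewise bound only yields $n(\gamma-1)|\lambda|$, which grows with $n$ and cannot give an $n$-independent $\beta$ at $\delta=0$. A purely pointwise argument therefore fails, and the real work is to show that the configurations making this product large carry negligible mass under $h_n$. The route I would take is to stop bounding the integrand pointwise and instead integrate directly, controlling the contribution of the over-threshold region $\{w:\text{ratio}>e^{\varepsilon/2}\}$ using the heavy-tail normalization, so that its excess is absorbed into $\delta/2$ (and argued to vanish for the pure $\varepsilon$-differentially private, $\delta=0$, statement along the lines of Nissim et al.'s more delicate treatment). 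I expect essentially all of the difficulty of the theorem to concentrate in making this tail estimate rigorous and robust to the dimension.
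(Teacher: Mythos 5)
First, a point of reference: the paper itself gives no proof of this theorem --- it is quoted verbatim from Nissim et al.\ \cite{nissim2007smooth} --- so there is no internal proof to compare against, and your argument has to be judged on its own terms. Your one-dimensional argument is correct and is essentially the standard one: with $\delta=0$, both admissibility conditions reduce, after the change of variables you describe, to pointwise density-ratio bounds; the Lipschitz estimate $\sup_z\left|(\log h)'(z)\right|=(\gamma-1)^{(\gamma-1)/\gamma}\le\gamma+1$ disposes of sliding, and the monotone-in-$u$ computation with extreme values $e^{-\lambda}$ and $e^{(\gamma-1)\lambda}$ disposes of dilation. The $\ell_1$ sliding bound for the $n$-fold product is likewise fine.

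The gap is in the product dilation property, and your suspicion there is well founded --- but the escape route you sketch cannot succeed, because for $\delta=0$ the dilation inequality must hold for \emph{every} measurable $\mathcal{S}$, including arbitrarily small sets concentrated where the ratio $\prod_i e^{-\lambda}\bigl(1+e^{\gamma\lambda}|w_i|^\gamma\bigr)/\bigl(1+|w_i|^\gamma\bigr)$ approaches its supremum $e^{n\max(1,\gamma-1)|\lambda|}$. Shrinking $\mathcal{S}$ to such a region makes the ratio of the two probabilities converge to the pointwise ratio, so no tail-mass or integration argument can absorb the excess without conceding $\delta>0$, which is weaker than the stated claim and insufficient for the pure $\varepsilon$-DP application in \cref{thm:cauchy}. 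Concretely, for $\gamma=2$ and $n=5$, take $\mathcal{S}=[M,M+1]^5$ with $M$ large: the probability ratio tends to $e^{5\lambda}$, so $\delta=0$ dilation forces $|\lambda|\le\varepsilon/10$, strictly smaller than the claimed $\beta=\varepsilon/6$. In other words, the ``moreover'' clause is false as transcribed unless $\beta$ is scaled down by $n$ (Nissim et al.'s multidimensional constructions do account for the dimension, either by shrinking $\beta$ or by using a non-product spherical density whose exponent grows with $n$). Fortunately, the paper only ever invokes the one-dimensional case --- \cref{alg:privnaive} perturbs each scalar parameter independently with its own budget $\varepsilon'$ --- so your correct one-dimensional proof covers everything that is actually used; I would simply state and prove only that case and drop or correct the product clause.
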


Nissim et al. show that a Cauchy distribution satisfies \cref{noise_theorem}.  They also show that approximate differential privacy can be satisfied under smooth sensitivity using noise from a gaussian or laplace distribution.  While we show only pure $(\varepsilon,0)$-differential privacy below, it is easily extended to approximate differential privacy; we show how this compares empirically in \cref{sec:experiments}.

\section{Differentially Private \nb}
While smooth sensitivity has been known for some time, it is often challenging to apply to practical problems.  Unlike global sensitivity, which requires only a worst-case analysis, to use \cref{thm:cauchy} with a naive application of \cref{def:smoothdist} is exponential in dataset size.  We now show how smooth sensitivity can be applied to create a differentially private \nb classifier. We first compute the \nb parameters (\cref{nb_section}). We then perturb the parameters with noise that preserves $\varepsilon$-differential privacy. As stated in \cref{nb_section}, a standard approach for fitting a machine learning model to a numerical attribute is to assume that the underlying distribution is Gaussian. We also assume that the numerical feature values are bounded. Hence, we start by computing the smooth sensitivity for estimating the parameters of the Truncated normal distribution.

Note that we only train using a subset of the data.  The way this subset is defined enables a smooth bound, as neighboring databases result in (at worst) training on a different subset rather than completely new data.
\begin{definition}(Trimmed Sample)
Let $X = {x_1, x_2, \ldots, x_n}$ denote the sample in sorted order and $m$ be a trimming parameter. The trimmed sequence of sample $X$ is:
$$x_{m+1}, x_{m+2}, \ldots, x_{n-m}$$
In other words, we draw a window of size $n-2m$ on the data.
 \label{trim}
\end{definition}

\subsection{Smooth Sensitivity of the Mean}
We start by computing the Smooth sensitivity of the mean of a dataset.
\begin{theorem}\label{t_mean}
Given a list of $n$ bounded real numbers $V = \{x_1, x_2, \ldots, x_n \}$ in the range of the interval $[L, U]$,  the Smooth sensitivity of the mean of $V$ can be computed in $\mathcal{O}(n^2)$.
\end{theorem}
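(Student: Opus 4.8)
The plan is to exploit the representation of smooth sensitivity through the quantities $A^{(k)}$ from \cref{def:smoothdist}, namely $S^*_{f,\beta}(x)=\max_{k=0,\ldots,n} e^{-k\beta}A^{(k)}(x)$, and to show that each $A^{(k)}(x)$ can be evaluated in $\mathcal{O}(k)$ time after an initial sort, so that the whole maximum costs $\sum_{k=0}^{n}\mathcal{O}(k)=\mathcal{O}(n^2)$. Here $f$ is the mean of the trimmed window of \cref{trim}, and I write $y_1\le\cdots\le y_n$ for sorted data. First I would sort $V$ in $\mathcal{O}(n\log n)$ and precompute the order statistics, since every later step refers only to them.

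The first substantive step is to obtain a closed form for the local sensitivity $LS_f(y)$ in terms of the order statistics of $y$. Changing one entry of $y$ either moves a point that is already inside the window, or pushes a point across a trimming boundary so that the window slides by one position. A short case analysis of moving the chosen point to the extremes $L$ or $U$ (the only candidates, since all values lie in $[L,U]$) shows that the window can shift up by one---replacing $y_{m+1}$ by $y_{n-m+1}$ in the averaged set---or down by one, giving
$$LS_f(y)=\frac{1}{n-2m}\max\bigl(y_{n-m+1}-y_{m+1},\; y_{n-m}-y_m\bigr).$$
Thus the local sensitivity depends on $y$ only through four order statistics straddling the two window boundaries.

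Next I would compute $A^{(k)}(x)=\max_{y:d(x,y)\le k}LS_f(y)$. The key claim is that the maximizing neighbour $y$ is obtained by pushing entries of $x$ to the extremes: to enlarge the top gap one converts the $a$ smallest entries to $U$ and the $b$ largest to $L$ with $a+b\le k$, which slides the relevant order statistics to $x_{n-m+1+a}$ and $x_{m+1-b}$ (with the conventions $x_j=U$ for $j>n$ and $x_j=L$ for $j<1$ handling the clamping to $[L,U]$). Because the resulting gap is monotone nondecreasing in both $a$ and $b$, the optimum uses $a+b=k$, so
$$A^{(k)}(x)=\frac{1}{n-2m}\max_{0\le a\le k}\bigl(x_{n-m+1+a}-x_{m+1-(k-a)}\bigr)$$
together with the symmetric expression for the lower boundary; each is a scan over $k+1$ splits and costs $\mathcal{O}(k)$.

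The main obstacle is justifying this extremal characterisation of $A^{(k)}$: one must argue by an exchange argument that among all datasets within distance $k$ the local sensitivity is maximised by the monotone ``push to $L/U$'' configuration, and that no interaction between the $k$ allowed edits and the one extra edit defining $LS_f$ produces a larger window shift than the formula above. Once that is settled, assembling $S^*_{f,\beta}(x)=\max_k e^{-k\beta}A^{(k)}(x)$ is a single pass over $k=0,\ldots,n$, and since level $k$ costs $\mathcal{O}(k)$ the total is $\mathcal{O}(n^2)$, as claimed in \cref{t_mean}.
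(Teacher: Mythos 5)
Your overall strategy coincides with the paper's: express the smooth sensitivity through the quantities $A^{(k)}$ of \cref{def:smoothdist}, argue that the distance-$k$ worst case is realized by pushing values to the interval endpoints $L$ and $U$, and pay $\mathcal{O}(k)$ per level for a total of $\mathcal{O}(n^2)$. Where you genuinely diverge is in \emph{which function you analyze and at what level of detail}. The paper's proof works with the untrimmed mean, asserts without argument that the distance-$k$ extremal dataset replaces the $k$ smallest values by $U$ (or the $k$ largest by $L$), and finishes by ``changing one last element''; it never mentions the trimming of \cref{trim} even though \cref{alg:privnaive} applies it before computing the sensitivity. You instead analyze the trimmed mean directly and derive the closed form $LS_f(y)=\frac{1}{n-2m}\max\bigl(y_{n-m+1}-y_{m+1},\,y_{n-m}-y_m\bigr)$, which is correct and is precisely the piece the paper leaves implicit. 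This matters: for the untrimmed mean one has $LS_f(y)=\frac{1}{n}\max(U-y_{\min},\,y_{\max}-L)$, which already reaches the global sensitivity $(U-L)/n$ at distance $1$, so without trimming the smooth-sensitivity machinery buys almost nothing; your version explains why the algorithm actually benefits.

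One caveat on your extremal characterization of $A^{(k)}$, which you correctly flag as the main obstacle. Converting the $a$ \emph{smallest} entries to $U$ is not the right move: removing entries from below position $m+1$ also raises the lower window boundary, so that configuration yields $y_{m+1}=x_{m+1+a-b}$ rather than $x_{m+1-b}$, and the gap is smaller than your formula claims. The correct exchange argument edits $k$ elements taken from the \emph{middle} of the sorted order (leaving the bottom $m+1-b$ and top $m-a$ originals untouched), turning $a$ of them into $U$ and $b=k-a$ into $L$; this attains $y_{n-m+1}=x_{n-m+1+a}$ and $y_{m+1}=x_{m+1-b}$ simultaneously, so your stated formula for $A^{(k)}$ is in fact the right one — only the justification offered for it is wrong. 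With that repair the scan over $a+b=k$ still costs $\mathcal{O}(k)$ per level and the $\mathcal{O}(n^2)$ bound of \cref{t_mean} stands. The paper's own proof does not address this interaction between the $k$ edits and the window boundaries at all, so on this point your attempt, once corrected, is the more complete argument.
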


\begin{proof}
Without loss of generality we assume that $V$ is in non-decreasing order. Now consider the set $V = \{x_1, x_2, \ldots, x_n \}$ with mean $\mu_V = \frac{x_1+x_2+\ldots+x_n}{n}$ and a set $V' = \{x_1', x_2', \ldots, x_n' \}$ with mean $\mu_{V'}=\frac{x'_1+x'_2+\ldots+x'_n}{n}$ where $d(V,V') = k$, i.e., it differs from $V$ by $k$ elements such that $|\mu_{V} - \mu_{V'}|$ is maximized. In the case that $\mu_{V'} > \mu_{V}$, it is easy to see that we have to replace $x_1, x_2, \ldots, x_k$ with $U$ (See \cref{fig:mean_figure}). Similarly, in the case that $\mu_{V} > \mu_{V'}$, we should replace $x_{n-k+1}, x_{n-k+2}, \ldots, x_n$ with $L$. Iterating through all possible choices of $1 \leq k \leq n$ and changing one last element to its extreme case ($L$ or $U$) would give us the smooth sensitivity.
\begin{figure}
    \centering
    \includegraphics[scale=0.5]{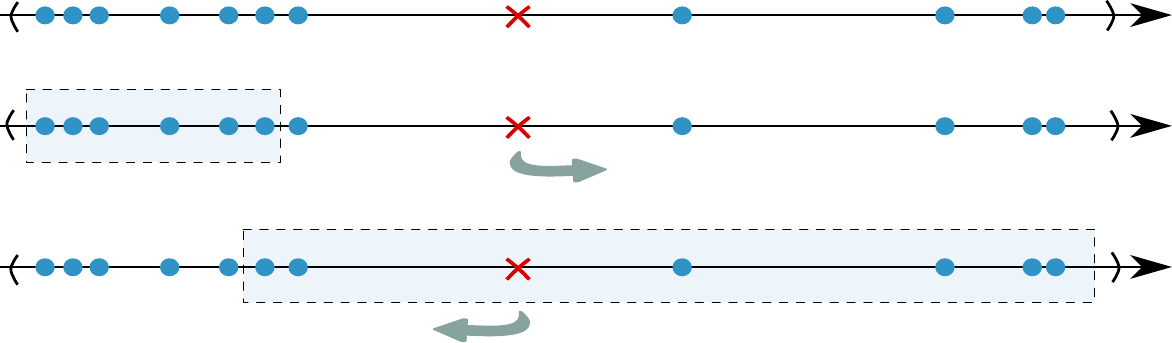}
    \caption{(top): blue dots represents points on the $x$ axis where the red cross mark represents their corresponding mean. (middle): shifting the mean towards the right-side by picking the $k$-smallest numbers and replacing them with $U$. (bottom): shifting the mean towards the left-side by picking the $k$-largest numbers and replacing them with $L$. }
    \label{fig:mean_figure}
\end{figure}
\end{proof}

\subsection{Smooth Sensitivity of Variance}
\label{sec:variance}
In this section, we will describe how to compute the smooth sensitivity of the variance of a dataset.
\begin{definition}{
($k$-maximal variance subset) Given a set $V = \{x_1, x_2, \ldots, x_n \}$ of $n$ real numbers in ascending order and an integer $k~(k < n)$, a subset $Q \subset V$ is called a $k$-maximal variance subset of $V$ if $|Q| = k$ and $\forall Q' \subset V, Q' \neq Q$ where $|Q'| = k$, $\Var[Q'] \leq \Var[Q]$.}
\end{definition}

\begin{theorem}
Given a list of $n$ bounded real numbers $V = \{x_1, x_2, \ldots, x_n \}$ in the range of interval $[L, U]$ and an integer $k~(k < n)$, the $k-$maximal variance subset can be computed in $\mathcal{O}(n^2)$.
\label{thm:maximal_var}
\end{theorem}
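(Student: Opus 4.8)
The plan is to reduce the combinatorial search over all $\binom{n}{k}$ subsets to a one‑parameter family, using a structural fact: there is a $k$‑maximal variance subset that is the union of a \emph{prefix} and a \emph{suffix} of the sorted list, i.e. of the form $Q_i = \{x_1,\dots,x_i\}\cup\{x_{n-k+i+1},\dots,x_n\}$ for some $0\le i\le k$. Granting this, the algorithm simply sweeps the split point $i$ from $0$ to $k$, evaluates $\Var[Q_i]$ for each, and returns the maximizer. There are only $k+1=\mathcal{O}(n)$ candidates, and maintaining running sums of the chosen $x$ and $x^2$ values (or recomputing each variance directly) evaluates each candidate in $\mathcal{O}(n)$ time, so the whole sweep costs $\mathcal{O}(n^2)$, matching the claimed bound (the input is already in ascending order per the definition, so no sorting term dominates). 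The case $k=1$ is degenerate — every singleton has variance $0$ — so I assume $k\ge 2$ throughout.

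The heart of the argument is the structural lemma, which I would establish by an exchange argument. Fix any subset $Q$ of size $k$ with mean $\mu$, and consider replacing a selected element $a\in Q$ by an unselected element $b\notin Q$. Writing $\Var[Q]=\frac1k\sum_{x\in Q}x^2-\mu^2$ and tracking the shift of the mean to $\mu+\tfrac{b-a}{k}$, a short computation gives
\[
\Var[Q']-\Var[Q]=\frac{b-a}{k}\left[(a+b)-2\mu-\frac{b-a}{k}\right].
\]
Now suppose $Q$ is \emph{not} prefix$+$suffix: then some selected $x_q$ has an unselected $x_p$ below it ($p<q$) and an unselected $x_r$ above it ($r>q$), with $x_p\le x_q\le x_r$. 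Applying the formula to the two candidate swaps $x_q\mapsto x_r$ and $x_q\mapsto x_p$, I would show that they cannot \emph{both} strictly decrease the variance: if they did, the two bracket inequalities combine to force $x_r-x_p<\tfrac{1}{k}(x_r-x_p)$, impossible for $k\ge2$ when $x_p<x_r$. Hence one of the two swaps — pushing an interior point to a more extreme value — never decreases the variance.

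From this local statement I would extract the global structure as follows. Among all maximum‑variance size‑$k$ subsets (a finite, nonempty family), choose one, $Q^{\ast}$, maximizing the spread potential $\Phi(Q)=\sum_{x_i\in Q}(2i-n-1)^2$, which rewards selecting extreme indices. If $Q^{\ast}$ were not prefix$+$suffix, the exchange above produces a variance‑non‑decreasing swap; since $Q^{\ast}$ is already optimal, that swap must leave the variance \emph{unchanged}, yielding another maximum‑variance subset, and I would argue it strictly increases $\Phi$, contradicting the choice of $Q^{\ast}$. This forces $Q^{\ast}=Q_i$ for some $i$, completing the reduction.

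I expect the last step to be the main obstacle. The exchange inequality guarantees that \emph{some} swap is variance‑non‑decreasing, while a separate sign computation on $\Phi$ (using $\Delta\Phi_{q\mapsto r}=4(r-q)(r+q-n-1)$ and its mirror image) guarantees that \emph{some} swap increases $\Phi$; the delicate point is to ensure these two ``some'' swaps can be taken to be the \emph{same} move, so that a single swap is simultaneously variance‑preserving and $\Phi$‑increasing. I would handle this by analyzing the two swaps jointly at $Q^{\ast}$ — where optimality already forces every swap to be variance‑non‑increasing, so the non‑decreasing swap is in fact variance‑neutral — and by exploiting the freedom to choose which unselected neighbours $x_p,x_r$ to use, picking $p$ and $r$ so that the variance‑neutral direction is also the $\Phi$‑increasing one. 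Carefully discharging this reconciliation, including ties among equal values, is the crux of correctness; once it is in place, the $\mathcal{O}(n^2)$ sweep is immediate.
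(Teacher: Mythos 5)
Your proposal follows essentially the same route as the paper's own proof: both establish the structural fact that a $k$-maximal variance subset can be taken to be a prefix plus a suffix of the sorted list via an exchange/perturbation argument (showing that moving a selected element away from the subset's mean cannot decrease the variance), and then sweep the $\mathcal{O}(n)$ split points $i$, each evaluated in $\mathcal{O}(n)$. Your treatment is in fact more careful than the paper's (which perturbs only the largest element by $\Updelta$ and dispatches the contradiction informally), and the reconciliation you flag as the crux does go through: substituting the condition that the swap $x_q\mapsto x_r$ is variance-neutral (i.e. its bracket vanishes, so $2\mu = x_q+x_r-\tfrac{x_r-x_q}{k}$) into the bracket for $x_q\mapsto x_p$ gives $-(x_r-x_p)(1-\tfrac1k)\le 0$, hence that swap is also variance-non-decreasing and therefore, at an optimum, also variance-neutral — so whichever of the two swaps strictly increases $\Phi$ (at least one always does, since $p<r$) is available, with equal-value ties handled by noting that swapping equal values preserves the multiset and hence the variance.
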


\begin{proof}
Without loss of generality assume that $V$ is in non-decreasing order. Let $Q = \{\Upgamma_1, \Upgamma_2, \ldots, \Upgamma_k\}$ be the $k-$maximal variance of $V$. We define the mean of $Q$ to be $\mu_Q = \frac{1}{k}(\Upgamma_1 + \Upgamma_2 + \ldots + \Upgamma_k)$. Let $Q' = \{\Upgamma_1, \Upgamma_2, \ldots, \Upgamma_k+\Updelta\}$, i.e., it only differs from $Q$ by adding $\Updelta$ to $\Upgamma_k$. Let $\mu_{Q'}$ be the mean of $Q'$, we have:
$$\mu_{Q'} = \mu_{Q} + \frac{1}{k} \Updelta$$
Let $\sigma^2_{Q}$ and $\sigma^2_{Q'}$ be the variance of $Q$ and $Q'$, respectively.
The variance of $Q = \{\Upgamma_1, \Upgamma_2, \ldots, \Upgamma_k\}$ (multiplied by $k$) is:
$$k\sigma^2_Q = (\Upgamma_1^2 + \ldots + \Upgamma_k^2) - k\mu^2_Q$$
Now we will see the impact of adding $\Updelta$ to $\Upgamma_k$ on the variance.
\begin{align*}
k\sigma_{Q'}^2 - k\sigma_{Q}^2 &= [(\Upgamma_k + \Updelta)^2 - \Upgamma_k^2] - k[\mu_{Q'}^2 - \mu_{Q}^2] \\
&= [\Upgamma_k^2 + \Updelta^2 + 2\Upgamma_k\Updelta - \Upgamma_k^2] - k(\mu_{Q} + \frac{\Updelta}{k})^2 \\
&\quad\quad\quad\quad\quad\quad\quad + k\mu_{Q}^2 \\
&= [\Upgamma_k^2 + \Updelta^2 + 2\Upgamma_k\Updelta - \Upgamma_k^2] - k\mu_{Q}^2 \\
&\quad\quad\quad\quad\quad\quad\quad - \frac{\Updelta^2}{k} - 2\mu_{Q} \Updelta + k\mu_{Q}^2 \\
&= 2\Updelta(\Upgamma_k - \mu_{Q}) + \frac{k-1}{k} \Updelta^2 \\
&\geq 2 \Updelta(\Upgamma_k - \mu_{Q}) 
\end{align*}

This value will be non-negative whenever the sign of $\Updelta$ and $(\Upgamma_k - \mu_{Q})$ are the same. That is, the difference in the variance will increase if we move $\Upgamma_k$ further from the mean $\mu_{Q}$. \\
Now assume that we are given a variance-maximizing sequence $S$ of $k$ values chosen from $V = \{x_1, x_2, \ldots, x_n \}$. Assume for the contradiction that $S$ contains an element $x_m$ where $x_m$ is not at the tail of $V$. That means there exists $x_r ~ (r \neq m)$, where if we replace $x_m$ with $x_r$ by the given inequalities we will increase the variance and $\Updelta  = x_m - x_r$ which contradicts that $S$ is a variance maximizing sequence.  So given $V = \{x_1, x_2, \ldots, x_n \}$, we know that the $k$-maximal variance subset selects elements from the tail of $V$. Iterating through all possible cases of $0 \leq i \leq k$, $j = k - i$, where the first $i$ elements are selected from the beginning of $V$, i.e., $(x_1, x_2, \ldots, x_i)$ , and $j$ elements are selected from the end of the sequence, i.e., $(x_{n-j+1}, \ldots, x_{n-1}, x_n)$, selecting the sequence that gives the maximum variance will be the solution to the $k$-maximal variance subset.
\end{proof}

\begin{corollary}\label{colremove}
Given a list of $n$ bounded real numbers $V = \{x_1, x_2, \ldots, x_n \}$ in the range $[L, U]$ and an integer $k~(k < n)$, the $k$-maximal variance subset can be achieved by removing $n-k$ consecutive elements in $V$.
\end{corollary}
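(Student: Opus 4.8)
The plan is to read this off directly from the structural characterization already established in \cref{thm:maximal_var}, so the argument is essentially a restatement of what we proved there together with a short counting observation rather than any new variance estimate.

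First I would recall the exact conclusion of \cref{thm:maximal_var}: with $V = \{x_1, \ldots, x_n\}$ written in non-decreasing order, a $k$-maximal variance subset consists of $i$ elements taken from the front of $V$, namely $x_1, \ldots, x_i$, together with $j = k - i$ elements taken from the back, namely $x_{n-j+1}, \ldots, x_n$, for some split $0 \le i \le k$. This is precisely the ``two tails'' description that the proof of \cref{thm:maximal_var} optimizes over, where one iterates through all choices with $i + j = k$ and keeps the one maximizing the variance. So I may take this decomposition as given.

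Next I would identify the complementary set of elements that are \emph{not} chosen. Deleting $x_1, \ldots, x_i$ and $x_{n-j+1}, \ldots, x_n$ from $V$ leaves exactly the entries whose indices run over $i+1, i+2, \ldots, n-j$. These indices are consecutive, so the discarded elements form a single contiguous block of the sorted list. A direct count gives $(n-j) - (i+1) + 1 = n - i - j = n - k$ removed elements. Hence the $k$-maximal variance subset is obtained by deleting a run of $n-k$ consecutive entries of $V$, which is the assertion of the corollary.

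The only point that warrants a word of care — and it is not really an obstacle — is the boundary behaviour when $i = 0$ or $j = 0$: in those cases all $k$ chosen elements come from one end, and the removed block is the suffix $x_{k+1}, \ldots, x_n$ or the prefix $x_1, \ldots, x_{n-k}$, respectively. Both of these are still consecutive runs of length $n-k$, so the statement holds uniformly across all splits. No additional computation is needed, since the variance-maximizing property is inherited verbatim from \cref{thm:maximal_var} and only its geometric consequence is being repackaged here.
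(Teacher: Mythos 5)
Your proposal is correct and matches the paper's reasoning exactly: the corollary is an immediate consequence of the ``two tails'' characterization in \cref{thm:maximal_var}, and the paper gives no separate argument beyond that observation. Your explicit index count showing the complement $\{x_{i+1},\ldots,x_{n-j}\}$ is a contiguous block of $n-k$ elements, together with the boundary cases $i=0$ and $j=0$, is precisely the intended (and only needed) content.
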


\begin{definition}{
($k$-minimal variance subset) Given a set $V = \{x_1, x_2, \ldots, x_n \}$ of $n$ real numbers in ascending order and an integer $k~(k < n)$, a subset $Q \subset V$ is called $k$-minimal variance subset of $V$ if $|Q| = k$ and $\forall Q' \subset V, Q' \neq Q$ where $|Q| = k$, $\Var[Q'] \geq \Var[Q]$.}
\end{definition}

\begin{theorem}\label{thm:min_var}
Given a list of $n$ real numbers $V = \{x_1, x_2, \ldots, x_n \}$ in the range of the interval $[L, U]$ and an integer $k~(k < n)$, the $k$-minimal variance subset can be computed in $\mathcal{O}(n^2)$.
\end{theorem}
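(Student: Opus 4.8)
The plan is to prove a structural result dual to \cref{thm:maximal_var}: whereas the $k$-maximal variance subset is assembled from the two \emph{tails} of the sorted list, the $k$-minimal variance subset is always a block of $k$ \emph{consecutive} elements of $V$. Granting this, the algorithm is immediate. After sorting $V$, I scan the $n-k+1$ length-$k$ windows, maintaining the running sum and running sum of squares as the window advances so that each window's variance is obtained incrementally, and return the window of least variance. Computing the variance of every window explicitly already fits within the claimed $O(n^2)$ bound, matching the cost of \cref{t_mean} and \cref{thm:maximal_var}; note that, unlike the maximal case, the endpoints $L,U$ play no role here, since the minimizer is drawn from the interior of the sorted order.

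The heart of the argument is an exchange lemma built on the single-element perturbation identity already derived in the proof of \cref{thm:maximal_var}: changing one member $\Gamma$ of a $k$-set by $\Delta$ changes $k$ times the variance by exactly $2\Delta(\Gamma-\mu_Q)+\tfrac{k-1}{k}\Delta^2$, where $\mu_Q$ is the mean before the change. I will use the direction opposite to the one exploited there, namely that moving an element \emph{toward} the mean decreases the variance, provided the move does not cross the mean. Indeed, if the step is toward $\mu_Q$ and $|\Delta|\le|\Gamma-\mu_Q|$, then the negative linear term has magnitude $2|\Delta|\,|\Gamma-\mu_Q|$ while the positive quadratic term is at most $\Delta^2\le|\Delta|\,|\Gamma-\mu_Q|$, so their sum is strictly negative whenever $\Gamma\neq\mu_Q$. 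The subtle point, and the main obstacle, is precisely this quadratic term $\tfrac{k-1}{k}\Delta^2$: a careless ``move toward the mean'' can overshoot and \emph{increase} the variance, so the argument must be arranged to guarantee the move halts at or before the mean.

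With this in hand I argue by contradiction. Suppose an optimal (minimum-variance) $k$-subset $Q$ is not a consecutive block; let $x_a=\min Q$ and $x_b=\max Q$, and let $\gamma\in V\setminus Q$ be a value lying strictly between them, which must exist. If $\gamma\ge\mu_Q$, I replace $x_b$ by $\gamma$: this moves the largest element, which sits above the mean, downward by $|\Delta|=x_b-\gamma\le x_b-\mu_Q=|\Gamma-\mu_Q|$, so the move does not cross the mean and the variance strictly decreases. If instead $\gamma\le\mu_Q$, the symmetric swap of $x_a$ for $\gamma$ strictly decreases the variance; the case split on which side of $\mu_Q$ the skipped value $\gamma$ falls is exactly what enforces the no-overshoot condition $|\Delta|\le|\Gamma-\mu_Q|$. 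Either case contradicts the minimality of $Q$, except in the degenerate situation $x_a=x_b$, where every element equals the mean and $\Var[Q]=0$ is already minimal. (Ties in value, where $\gamma$ equals an endpoint, leave the variance unchanged and can be absorbed by breaking ties toward consecutive index sets.) Hence any minimizer is a contiguous window, which establishes correctness of the scan and the $O(n^2)$ bound.
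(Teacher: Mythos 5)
Your proof is correct and follows the same route as the paper's: both reduce the problem to the structural fact that the $k$-minimal variance subset can be taken to be a contiguous block of the sorted list and then enumerate all windows, well within $\mathcal{O}(n^2)$. The paper merely asserts this structure as ``similar to'' \cref{thm:maximal_var}, whereas you supply the missing exchange argument --- in particular the no-overshoot condition $|\Delta|\le|\Gamma-\mu_Q|$ needed to dominate the $\frac{k-1}{k}\Delta^2$ term when moving an element toward the mean --- so your version is strictly more complete than the one-line proof in the paper.
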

\begin{proof}
The proof is similar to the proof of \cref{thm:maximal_var}. Without loss of generality assume that $V$ is in non-decreasing order. Hence, given $V=\{x_1,x_2,\ldots,x_n\}$ the optimal solution would remove $\{x_1, x_2, \ldots, x_i\}$ and $\{x_{n - k + i}, \ldots, x_{n-1}, x_n\}$. Iterating through all possible $0 \leq i \leq k$ would give the optimal solution.
\end{proof}

\begin{theorem}\label{t_variance}
Given a list of $n$ bounded real numbers $V = \{x_1, x_2, \ldots, x_n \}$ in the range of interval $[L, U]$, the Smooth sensitivity of the variance of $V$ can be computed in $\mathcal{O}(n^2)$.
\end{theorem}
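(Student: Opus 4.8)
The plan is to follow the same template as \cref{t_mean} and reduce the computation to the quantity $A^{(k)}$ of \cref{def:smoothdist}. Writing the smooth sensitivity as $S^*_{\Var,\varepsilon}(V)=\max_{k=0,1,\ldots,n} e^{-k\varepsilon}A^{(k)}(V)$ with $A^{(k)}(V)=\max_{y:d(V,y)\le k} LS_{\Var}(y)$, it suffices to evaluate each $A^{(k)}(V)$ in $O(n)$ time; the outer maximization over the $n+1$ values of $k$ is then linear, giving $\mathcal{O}(n^2)$ overall after an initial $O(n\log n)$ sort and $O(n)$ prefix-sum precomputation.

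First I would pin down the local sensitivity of the variance at a fixed dataset $y$. If a single entry $y_j$ is replaced by $v\in[L,U]$, the remaining $n-1$ entries $W$ are untouched, and a direct computation (the same algebra as in the proof of \cref{thm:maximal_var}, now with the full count $n$) gives $n\bigl(\Var(W\cup\{v\})-\Var(W\cup\{y_j\})\bigr)=(v-y_j)\bigl[(n-1)(v+y_j)-2\textstyle\sum_{w\in W}w\bigr]$, a quadratic in $v$. Consequently the largest \emph{increase} in variance from one change is obtained by pushing some element to a boundary $L$ or $U$, while the largest \emph{decrease} is obtained by collapsing some element onto the mean of the others. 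Thus $LS_{\Var}(y)$ is the maximum of an upward term and a downward term, each realized by a single extremal move.

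Next I would compute $A^{(k)}(V)$. A dataset $y$ at distance $\le k$ together with the extra single change used for the local sensitivity amounts to modifying at most $k+1$ entries of $V$, where the final modification is one boundary move (for the upward term) or one mean-collapse move (for the downward term). I claim the worst case is structured: the modified entries are sent to the extremes and the untouched middle entries form a contiguous block of the sorted list, exactly as in \cref{thm:maximal_var}, \cref{colremove}, and \cref{thm:min_var}. Granting this, the search over $y$ at distance $\le k$ collapses to iterating over the $O(n)$ ways of splitting the modified entries between the low and high tails, each candidate scored in $O(1)$ using prefix sums of the $x_i$ and $x_i^2$. Maximizing the upward and downward terms separately, by reusing the tail-split enumeration underlying \cref{thm:maximal_var} and \cref{thm:min_var}, then yields $A^{(k)}(V)$.

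The hard part will be justifying this tail/contiguity structure once the distance-$k$ perturbation and the local-sensitivity move are coupled, since $y$ and its neighbor $z$ share $n-1$ entries and cannot be optimized independently. I would discharge it with the exchange argument already used in \cref{thm:maximal_var}: in any purportedly optimal configuration, replacing an interior modified element by a tail element preserves the distance budget while not decreasing the relevant variance swing (the increment $2\Updelta(\Upgamma_k-\mu)+\frac{k-1}{k}\Updelta^2$ there is nonnegative when moving away from the mean), contradicting optimality. With the structure in hand, each $A^{(k)}(V)$ costs $O(n)$, and the claimed $\mathcal{O}(n^2)$ bound follows.
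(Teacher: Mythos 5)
Your overall plan is the same as the paper's: reduce to the distance-$k$ sensitivities $A^{(k)}$, argue that the extremal configurations inherit the contiguity structure of \cref{thm:maximal_var}--\cref{thm:min_var}, finish each candidate with a single boundary or mean-collapse move, and score candidates in constant time with prefix sums. The gap is that your structural claim for the variance-\emph{increasing} direction is inverted. \cref{colremove} says the $k$-maximal variance subset is obtained by \emph{discarding a contiguous block} and \emph{keeping} the two tails; translated to the present setting, the entries worth \emph{modifying} (and sending to $L$ or $U$) form a contiguous window of the sorted list, typically near the mean, while the untouched entries are a prefix and a suffix. You claim the opposite: that the modified entries are the tails and the untouched middle is contiguous. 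Your own exchange inequality shows why this fails: the gain $2\Delta(\Gamma-\mu)+\frac{k-1}{k}\Delta^2$ from displacing an element by $\Delta$ away from the mean is largest for elements \emph{near} the mean, which have the most room to travel to an extreme; tail elements already contribute nearly their maximum. Concretely, for $V=\{0.4,0.5,0.5,0.6\}$ in $[0,1]$ and $k=2$, replacing the two middle values by $\{0,1\}$ gives variance $0.13$, whereas replacing the tail values $\{0.4,0.6\}$ by $\{0,1\}$ gives only $0.125$, so your candidate family misses the maximizer.

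The paper's proof instead slides a length-$k$ window over the sorted data and assigns $t$ of the windowed entries to $L$ and $k-t$ to $U$ (\cref{fig:variance_figure}), and reserves the tails-modified structure for the variance-\emph{minimizing} direction, where the first $t$ and last $k-t$ entries are collapsed onto the mean of the remainder rather than sent to the extremes as your phrasing suggests (your separate mean-collapse treatment of the downward term partially covers this, but the stated candidate set does not). With the window structure corrected, your accounting of $O(n)$ candidates per $k$ and the $O(n^2)$ total goes through essentially as in the paper.
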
 
\begin{proof}
\begin{figure}
    \centering
    \includegraphics[scale=0.6]{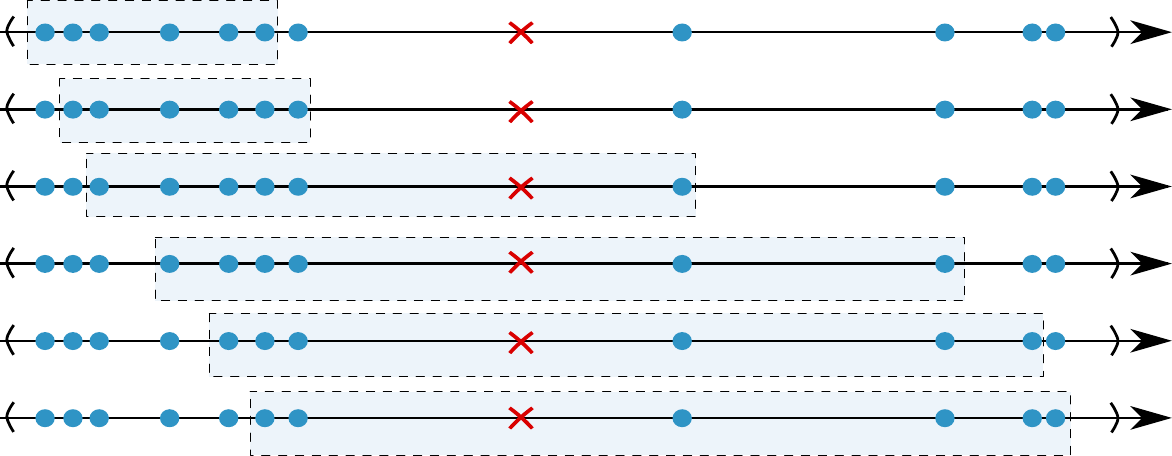}
    \caption{Blue dots represents points on the $x$ axis where the red cross mark represents their corresponding mean. For a fixed $k = 6$, we use a sliding window (shaded box) and move it through $x$ axis. We assign $t$ of the numbers to $L$ and $k-t$ to $U$ and save the maximum variance of the resulted sequence.}
    \label{fig:variance_figure}
\end{figure}
By \cref{colremove}, the $k$-maximal variance subset can be achieved by removing $n-k$ consecutive elements. Without loss of generality assume that $V$ is in non-decreasing order. We can iterate through all $k$ consecutive elements in $V$ and assign $t$ of them to be $L$, and $k-t$ of them to be $U$. The maximum over all possible cases would be the maximal variance. Similarly, for minimizing the variance by using \cref{thm:min_var}, we replace $\{x_1, x_2, \ldots, x_t \}$, i.e., the first $t$ elements and $\{x_{n - k + t}, \ldots, x_n \}$, i.e., the last $k-t$ elements with $\mu = \frac{x_{t+1}, x_{t+2}, \ldots, x_{n - k + t - 1}}{(n - k - 2)}$. Changing one element to its extreme case ($U$ or $L$), or to the mean of the sequence would give us the smooth sensitivity.
\end{proof}

\cref{examp_smooth} shows an example for the extreme change of the global sensitivity.

\begin{figure}[hb]
    \centering
    \includegraphics[scale=0.8]{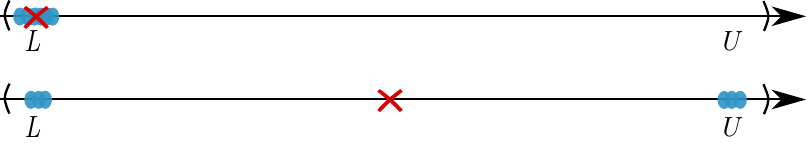}
    \caption{Maximum change of the global sensitivity.  (top): all points are on $L$, by changing $k=3$ points to the other extreme point ($U$), the variance and mean have the maximum change.}
    \label{examp_smooth}
\end{figure}


Note that another way to compute a differentially private variance of bounded dataset is to use $Var(X) = E[X^2]-E[X]^2$.  In this case, we would need to compute a differentially private $E[X]$ and $E[X^2]$.  While we already have the mean $E[X]$, we would need to use the privacy budget used for $Var(X)$ to instead calculate $E[X^2]$.  However, this results in $Var(X)$ being computed from two noisy values, resulting in a less accurate result than our method of computing it directly. 

\section{Bun and Steinke's Mean Estimation Using Smooth Sensitivity}
Bun and Steinke \cite{bun2019average} provided an algorithm for privately estimating the mean of a distribution. Their algorithm estimates the mean from an i.i.d. sample $x$ of the dataset. They first assume a crude bound on the mean $\mu \in [a,b]$, then they truncate the samples, i.e. they remove the largest $m$ samples and smallest $m$ samples from $x$, and compute the mean of the $n-2m$ samples. Finally they project the estimated mean to the range $[a,b]$. They have defined several admissible distributions such that adding noise proportional to them would guarantee $\frac{1}{2}\varepsilon^2$-CDP (Concentrated Differential Privacy). These include Student's T, Laplace log-normal, uniform log-normal, and arsinh-normal. \cref{thm:bunsteinke} is the main result of their paper:
\begin{theorem} (Bun and Steinke \cite{bun2019average})
Let $n \geq O(\log((b-a)\sigma)/\varepsilon)$, then there exist a $\varepsilon$-DP (or $\frac{1}{2}\varepsilon^2$-CDP) algorithm $\mathcal{M}: R^n \rightarrow R$ such that, for all $\mu \in [a,b]$, we have:
$$\mathbb{E}[(\mathcal{M}(x)-\mu)^2] \leq \frac{\sigma^2}{n} + \frac{\sigma^2}{n^2}.O(\frac{\log \frac{b-a}{\sigma}}{\varepsilon}+\frac{\log n}{\varepsilon^2})$$
\label{thm:bunsteinke} 
\end{theorem}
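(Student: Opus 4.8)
The plan is to instantiate the algorithm sketched above — trim the $m$ largest and $m$ smallest samples, average the remaining $n-2m$, add noise scaled to the smooth sensitivity of this \emph{trimmed mean}, and project the result onto $[a,b]$ — and then control the mean squared error by splitting it into a statistical part and a privacy part. The privacy guarantee itself is essentially free: it is inherited directly from the smooth-sensitivity mechanism of \cref{thm:cauchy} (for pure DP) or its CDP analogue, since we add admissible noise scaled to a smooth upper bound on the local sensitivity. The substance of the theorem is therefore the error bound. First I would fix the trimming depth at $m = \Theta\!\left(\frac{\log((b-a)/\sigma)}{\varepsilon}\right)$; the sample-size hypothesis is exactly what guarantees $m < n/2$ so that the estimator is well defined. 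Writing $\hat\mu$ for the trimmed mean and $Z$ for the independent, zero-mean noise, I would use that projection onto the convex set $[a,b]$ is non-expansive and that $\mu \in [a,b]$ to obtain
\begin{equation*}
\mathbb{E}[(\mathcal{M}(x)-\mu)^2] \leq \mathbb{E}[(\hat\mu + Z - \mu)^2] = \mathbb{E}[(\hat\mu - \mu)^2] + \mathbb{E}[Z^2],
\end{equation*}
the cross term vanishing because $Z$ has conditional mean zero given the data.

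For the statistical term I would argue that the trimmed mean of a distribution with variance $\sigma^2$ is essentially unbiased with variance $\approx \sigma^2/(n-2m)$; expanding $\frac{1}{n-2m} = \frac{1}{n}\bigl(1 + O(m/n)\bigr)$ yields $\mathbb{E}[(\hat\mu-\mu)^2] \leq \frac{\sigma^2}{n} + \frac{\sigma^2}{n^2}\,O(m) = \frac{\sigma^2}{n} + \frac{\sigma^2}{n^2}\,O\!\left(\frac{\log((b-a)/\sigma)}{\varepsilon}\right)$, which is the first of the two error terms. The bias introduced by trimming is controlled by the same depth choice and is absorbed into this estimate.

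The privacy term is where the real work lies. I would first express the local sensitivity of the trimmed mean at distance $k$ (the quantity $A^{(k)}$ of \cref{def:smoothdist}) in terms of the order-statistic gaps of the sample: moving $k$ points can shift the averaging window and perturb $\hat\mu$ by at most the span of order statistics within distance $k$ of the trimming boundary, divided by $n-2m$. Plugging this into $S^*_{f,\beta}(x) = \max_k e^{-\beta k} A^{(k)}(x)$ with $\beta = \Theta(\varepsilon)$, the exponential weight is what makes the deep trimming pay off: reaching order statistics far enough out to access the full range $b-a$ requires moving about $m$ points, and by the choice of $m$ the resulting factor $e^{-\beta m} \le \sigma/(b-a)$ suppresses the tail/range contributions below the scale of the bulk gaps. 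It then remains to bound the expected squared bulk gap; for a variance-$\sigma^2$ sample the maximum inter-order-statistic gap inside the trimmed region is $O(\sigma\sqrt{\log n}/n)$ in expectation, giving $\mathbb{E}[(S^*)^2] = \frac{\sigma^2}{n^2}\,O(\log n)$ and hence $\mathbb{E}[Z^2] = \frac{\mathbb{E}[(S^*)^2]}{\varepsilon^2}\,\mathbb{E}[\eta^2] = \frac{\sigma^2}{n^2}\,O\!\left(\frac{\log n}{\varepsilon^2}\right)$, the second error term. Summing the two terms proves the bound; the CDP variant follows by replacing the pure-DP admissible noise of \cref{noise_theorem} with one of Bun and Steinke's CDP-admissible distributions, which only changes constants and the moment $\mathbb{E}[\eta^2]$.

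The step I expect to be the main obstacle is the control of the expected smooth sensitivity: unlike global sensitivity it is a maximum over \emph{all} distances $k$ of a data-dependent local sensitivity, so bounding its second moment requires simultaneously (i) an exact combinatorial description of how moving $k$ points reshapes the trimmed average in terms of order statistics, (ii) the calibration of $\beta$ and $m$ against $\varepsilon$ and $b-a$ so that the exponential decay dominates the growth of the tail gaps, and (iii) a tail bound on the joint distribution of order-statistic spacings to turn the worst-case gap expression into the $O(\sigma\sqrt{\log n}/n)$ expectation. A secondary technical point is verifying that the chosen admissible noise has finite second moment, so that $\mathbb{E}[\eta^2]$ is a constant rather than infinite as it would be for a heavy-tailed Cauchy; this is what dictates which admissible distribution may be used for the mean-squared-error guarantee.
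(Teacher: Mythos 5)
The paper does not prove this statement: \cref{thm:bunsteinke} is imported verbatim from Bun and Steinke \cite{bun2019average} and is used purely as a point of comparison for the experiments, so there is no in-paper argument to measure your proposal against. On its own terms, your sketch is a reasonable reconstruction of the general shape of the cited paper's argument --- trimmed mean, noise calibrated to a smooth upper bound on its local sensitivity, projection onto $[a,b]$, and an MSE decomposition into a sampling term and a noise term --- and you correctly identify the two points that carry all the difficulty: (i) the second moment of the smooth sensitivity of the trimmed mean must be bounded in expectation over the sample via order-statistic spacings, which you assert ($O(\sigma\sqrt{\log n}/n)$ for the bulk gaps) rather than derive, and (ii) the admissible noise must have finite variance, which rules out the Cauchy distribution that this paper itself uses in its empirical comparison and forces one of Bun and Steinke's finite-moment alternatives (Student's $T$ with enough degrees of freedom, Laplace log-normal, etc.). Because the paper offers no proof, those two items remain genuine gaps in your write-up rather than divergences from an existing argument; to close them you would need to reproduce the spacing/concentration analysis from \cite{bun2019average} itself. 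One further caution: your claim that the trimmed mean is ``essentially unbiased'' with variance $\sigma^2/(n-2m)$ needs the moment or boundedness assumptions of the source theorem to control the trimming bias --- for heavy-tailed samples the bias of an $m$-trimmed mean is not automatically absorbed into the stated error terms.
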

In \cref{thm:bunsteinke}, the first part is the non-private optimal mean-squared error and the additional term is the cost of the privacy.  The key difference between our approach and that of \cite{bun2019average} is that we assume structural or data-independent bounds on values (e.g., age between 0 and 125).  While less general than \cite{bun2019average}, in practical cases this allows for better differentially private estimates.  This also enables an independent private estimate of variance (\cref{sec:variance}) that provides better results than basing variance on estimates of the mean.  The approach of \cite{bun2019average} also requires choosing a smoothing parameter; it is not clear how to do this in a data independent or differentially private manner (although in our experiments and those of \cite{bun2019average}, the results were not that sensitive to the choice of smoothing parameter.)

Lemma 9 in \cite{bun2019average} shows that concentrated-differential privacy holds for the Laplace log-normal distribution.  From \cite{bun2019average} and the smooth sensitivity paper \cite{nissim2007smooth}, it follows that pure differential privacy holds (the above Theorem) when using the Cauchy distribution.  To give a fair comparison with the pure differential privacy of our approach, we use the Cauchy distribution with Bun and Steinke's approach in our comparisons.

\section{Algorithm}
We now give pseudocode to describe the algorithm for computing the smooth sensitivity of the \nb classifier. At a high level, we first compute the parameters of the \nb model, then compute the smooth sensitivity of each parameter and perturb the parameters with noise drawn from a Cauchy distribution. From \cref{t_mean} and \cref{t_variance}, one can compute the smooth sensitivity of the parameters for fitting a Gaussian distribution to the continuous data. For discrete variables the sensitivity is $1$ and can be perturbed by adding the small amount of noise $1/\varepsilon$.

We use an equal division of privacy budget between all accesses to data in keeping with \cite{vaidya2013differentially}.  Our goal for this paper is to show the value of smooth sensitivity, so we have kept with their division; we briefly discuss other allocations of privacy budget in \cref{sec:diffbudget}.

\begin{algorithm}
\DontPrintSemicolon
\KwIn{
    \begin{itemize}
        \item Labeled training data $D=\{(X^{(i)}, y_i)\}_{i=1}^{n}$
        \item $\varepsilon$ : the privacy parameter
        \item $Lap(\alpha, \beta)$: samples the Laplace distribution with mean $\alpha$ and scale $\beta$
        \item $Cauchy(\alpha, \beta)$: samples the Cauchy distribution with location parameter $\alpha$ and scale $\beta$
        \item Bound $[a_i,b_i]$ for each numerical attribute $X_i$
    \end{itemize} 
}
\textit{num\textunderscore count} $\leftarrow$ number of numerical attributes of $D$\;
\textit{cat\textunderscore count} $\leftarrow$ number of categorical attributes of $D$\;
$\varepsilon' \leftarrow \varepsilon / (2 * \textit{num\textunderscore count + cat\textunderscore count + 1)}$\;
\For{each attribute $X_j$} {
    \If{$X_j$ is categorical} {
    	sensitivity, s $\leftarrow$ 1\;
		scale factor, sf $\leftarrow$ $s/ \varepsilon'$\;
	    $\forall$ counts $\tau\{X_j=a_k \land C=c_l\}' = \tau\{X_j=a_k \land C=c_l\} + Laplace(0, sf)$\;
		Use $\tau\{X_j=a_k \land C=c_l\}'$ to compute $\Pr(a_k|c_l)$\;
    }
    \ElseIf{$X_j$ is numeric} {
        Trim the given sample based on \cref{trim}.\;

        Compute the Smooth Sensitivity, $s$ for mean $\mu_{jk}$ as per \cref{t_mean} with bound $[a_j, b_j]$\;
		scale factor, $sf \leftarrow \sqrt{2}s/\varepsilon'$\;
		$\mu_{jk}' \leftarrow \mu_{jk} + Cauchy(0, sf)$\;
		Compute the Smooth Sensitivity, $s$ for the standard deviation $\sigma_{jk}$ as per \cref{t_variance} with bound $[a_j, b_j]$\;
		scale factor, $sf \leftarrow \sqrt{2}s/\varepsilon'$\;
		$\sigma_{jk}' \leftarrow \sigma_{jk}$ + Cauchy(0, $sf$)\;
		Use $\mu_{jk}'$ and $\sigma_{jk}'$ to compute $\Pr(x_i|c_k)$\;
    }
    \For{each class $c_j$} {
    	count $\tau\{C=c_j\}' \leftarrow \tau\{C=c_j\} +$ Cauchy(0, $\sqrt{2}S/\varepsilon'$)\;
		Use $\tau\{C=c_j\}'$ to compute the prior $\Pr(c_j)$\;
    }
}
\caption{{\sc Differentially private \nb Classifier}}
\label{alg:privnaive}
\end{algorithm}

Similar to Vaidya et al.'s \cite{vaidya2013differentially} approach, when Cauchy noise is added, it is possible to make mean, standard deviation, counts, and class prior negative. To prevent this, we truncate the negative values to zero (a postprocessing step that does not impact the differential privacy guarantee.)

\subsection{Complete Privacy Guarantee}
\begin{theorem}
\cref{alg:privnaive} provides $\varepsilon$-differential privacy.
\end{theorem}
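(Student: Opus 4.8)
The plan is to decompose \cref{alg:privnaive} into a finite sequence of primitive releases, argue that each release is on its own $\varepsilon'$-differentially private, and then invoke sequential composition (\cref{thm_comp}) to conclude that the full algorithm spends exactly $\varepsilon$. Concretely, the releases are: for every numerical attribute, one noisy mean $\mu_{jk}'$ and one noisy standard deviation $\sigma_{jk}'$ (two accesses each); for every categorical attribute, one noisy count vector $\tau\{X_j = a_k \wedge C = c_l\}'$ (one access each); and one noisy class-count/prior release. This is precisely $2\cdot\textit{num\textunderscore count} + \textit{cat\textunderscore count} + 1$ accesses, which is why the algorithm sets $\varepsilon' = \varepsilon/(2\cdot\textit{num\textunderscore count} + \textit{cat\textunderscore count} + 1)$; if each access is $\varepsilon'$-DP, then by \cref{thm_comp} their combination is $(2\cdot\textit{num\textunderscore count} + \textit{cat\textunderscore count} + 1)\varepsilon' = \varepsilon$-DP.

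Next I would establish the per-step privacy of the numerical parameters, which is the heart of the argument. For a fixed numerical attribute and class, the released mean (respectively standard deviation) is a real-valued function $f$ of the database, and the algorithm perturbs it with Cauchy noise of scale $\sqrt{2}s/\varepsilon'$, where $s$ is the value returned by \cref{t_mean} (respectively \cref{t_variance}). Since those theorems compute the $\beta$-smooth sensitivity $S(x)$, which is by construction a $\beta$-smooth upper bound on $LS_f$, I would apply \cref{thm:cauchy} to conclude $\varepsilon'$-DP for each numerical release, after checking that the Cauchy noise exponent $\gamma$ and the employed scale realize the calibration required there, in particular that the induced $\beta$ satisfies $\beta \le \varepsilon'/(2(\gamma+1))$. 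The role of the trimming step (\cref{trim}) enters here: because training uses only the windowed subsample, a neighboring database changes the trimmed input in a controlled way rather than arbitrarily, so $f$ has exactly the bounded, smoothly varying local sensitivity that \cref{t_mean} and \cref{t_variance} quantify.

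I would then dispatch the remaining releases, which are routine. Each categorical count-vector release is an instance of the Laplace mechanism with $\ell_1$-sensitivity $s=1$ and scale $1/\varepsilon'$, hence $\varepsilon'$-DP by the calibration to global sensitivity of Dwork et al.\ \cite{dwork2006calibrating}; the class-prior release is $\varepsilon'$-DP by the same smooth-sensitivity/Cauchy argument, where the class count has local sensitivity $1$ everywhere so that its smooth sensitivity coincides with its global sensitivity. Finally, truncating any negative noisy value to zero is post-processing of the released quantities and does not affect the guarantee, and the trimming is a preprocessing step already absorbed into the local-sensitivity analysis. Assembling the per-step bounds through \cref{thm_comp} then yields the claim.

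The main obstacle is the numerical step. One must confirm that the quantity $s$ returned by \cref{t_mean} and \cref{t_variance} is a genuine $\beta$-smooth upper bound for the function actually computed on the trimmed sample \emph{viewed as a function of the whole database}, so that the neighbor relation being controlled is the one in \cref{def:dp} rather than a neighbor relation on the already-trimmed list; this is exactly where the interaction between trimming and the neighbor relation must be pinned down. In parallel, one must verify that the concrete Cauchy scale $\sqrt{2}s/\varepsilon'$ together with the chosen $\gamma$ meets both hypotheses of \cref{thm:cauchy}. Once these two checks are in place, the composition accounting and the categorical and count steps follow immediately.
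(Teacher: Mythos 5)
Your proposal takes essentially the same route as the paper: decompose the algorithm into its primitive releases, argue each is $\varepsilon'$-differentially private, and combine via sequential composition (\cref{thm_comp}). In fact the paper's own proof is just those two sentences with no per-step detail, so your elaboration --- the budget accounting $2\cdot\textit{num\textunderscore count}+\textit{cat\textunderscore count}+1$, the appeal to \cref{thm:cauchy} for the Cauchy-noised numerical parameters, the Laplace mechanism for categorical counts, the post-processing remark, and especially the flagged checks on the trimming/neighbor-relation interaction and the $\sqrt{2}s/\varepsilon'$ calibration against the hypotheses of \cref{thm:cauchy} --- is considerably more careful than what the paper provides.
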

\begin{proof}
Each step $i$ of the \cref{alg:privnaive} is $\varepsilon'_i$-differentially private. By \cref{thm_comp}, the composition of finite number of $\varepsilon'_i$-differentially private algorithm is itself $(\sum\limits_{i} \varepsilon'_i)-$differentially private.
\end{proof}

\subsection{Runtime Analysis of Algorithm \ref{alg:privnaive}}
Bounding the smooth sensitivity for a given dataset does come at a cost:
\begin{theorem}
Algorithm \ref{alg:privnaive} computes differentially private \nb in $\mathcal{O}(nk+n^2)$.
\end{theorem}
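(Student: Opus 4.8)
The plan is to view \cref{alg:privnaive} as a single outer loop over the attributes and to charge the work done in each iteration to one of the two terms $nk$ and $n^2$. Writing $n$ for the number of training records and $k$ for the number of attributes, I would first dispose of the data-reading and categorical bookkeeping. For a categorical attribute, forming all the counts $\tau\{X_j = a \wedge C = c\}$ over attribute-values $a$ and classes $c$ requires only a single pass over the $n$ records (each record increments exactly one count), so this costs $O(n)$; adding the Laplace noise and forming the conditional probabilities is then linear in the number of attribute-value/class pairs, which is independent of $n$. Summed over the (at most $k$) attributes, together with the per-class prior loop, this accounts for the $O(nk)$ term.

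Second, I would bound the numerical work, which is where the $n^2$ term originates. For each numerical attribute the algorithm sorts and trims the sample ($O(n\log n)$), then invokes the smooth-sensitivity computations for the mean and the standard deviation. By \cref{t_mean} and \cref{t_variance} each of these runs in $O(n^2)$, while sampling the Cauchy noise and forming the density are $O(1)$. Hence each numerical attribute contributes $O(n^2)$, which dominates both the $O(n\log n)$ sort and the $O(n)$ linear work, and this is the source of the $n^2$ term. Summing the two contributions and invoking \cref{t_mean} and \cref{t_variance} as black boxes then yields $O(nk + n^2)$.

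The main obstacle --- and the point that needs a careful statement rather than a hand-wave --- is reconciling the quadratic smooth-sensitivity cost, which is incurred once per numerical attribute, with the single $n^2$ term in the claimed bound. Strictly, summing $O(n^2)$ over all numerical attributes gives $O(Mn^2)$, where $M$ is the number of numerical attributes; collapsing this to $O(n^2)$ requires either absorbing $M$ into $k$ (so that the honest bound reads $O(nk + kn^2)$) or treating the number of attributes as a fixed parameter of the learning problem rather than as part of the input size. I would therefore make explicit at the outset exactly what $n$ and $k$ denote and state the convention under which the per-attribute quadratic costs aggregate, so that the lone $n^2$ term is justified. Once this bookkeeping is fixed, the remainder is a routine term-by-term summation over the branches of the loop, with the dominating cost identified by \cref{t_mean} and \cref{t_variance}.
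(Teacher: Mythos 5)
Your proposal is correct and follows essentially the same route as the paper's own proof: identify the $\mathcal{O}(n^2)$ smooth-sensitivity computations (\cref{t_mean}, \cref{t_variance}) as the dominant preprocessing cost and add the $\mathcal{O}(nk)$ cost of standard \nb training. The aggregation issue you flag (summing the quadratic cost over all numerical attributes strictly gives $\mathcal{O}(kn^2)$) is glossed over in the paper's proof as well, which simply asserts the preprocessing is $\mathcal{O}(n^2)$, so your explicit treatment of that convention is if anything more careful than the original.
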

\begin{proof}
The most time consuming part of computing differentially private \nb classifier is \cref{thm:maximal_var} that computes the smooth sensitivity of the variance in $\mathcal{O}(n^2)$ where $n$ is the number of rows in dataset. The rest of the computation can be done in linear time, hence, the pre-processing time is $\mathcal{O}(n^2)$. The running time of the standard \nb is $\mathcal{O}(nk)$, therefore, the total running time of differentially private \nb with the added pre-processing is $\mathcal{O}(nk+n^2)$.
\end{proof}

\section{Empirical Analysis}
\label{sec:experiments}
We have implemented the \nb classifier of Algorithm \ref{alg:privnaive} in Python.  We show a comparison with the results presented in \cite{vaidya2013differentially}, as well as a more realistic example that was not used in their work. We have also compared our algorithm to Bun and Steinke's \cite{bun2019average} private mean estimation. Since they did not specify an algorithm for computing the variance of a distribution, and our variance computation requires known bounds on values, we have naively used their mean estimation algorithm to estimate the variance using $Var(x) = E[X^2]-E[X]^2$. This demonstrates the practical improvements realized with smooth sensitivity.  We also show practical computational costs required to achieve this benefit.
\subsection{Datasets}
The datasets used in our experiment include those from the UCI repository \cite{Dua:2019} used in \cite{vaidya2013differentially}: Adult, Mushroom, Skin, Seed and Glass.
We also give results from a much more realistic dataset: the IPUMS USA: Version 8.0 Extract of 1940 Census for U.S. Census Bureau Disclosure Avoidance Research.

The UCI \textbf{Adult dataset} is drawn from 1994 census data of the United States. It consists of a 48K record subset drawn from a stratified sample of the U.S. population, the binary classification task is to predict if the income of an individual is less than or equal to 50K or not.

The UCI \textbf{Mushroom dataset} includes descriptions of hypothetical samples corresponding to 23 species of gilled mushrooms in the Agaricus and Lepiota Family. Each species is identified as definitely edible, definitely poisonous, or of unknown edibility and not recommended.

The UCI \textbf{Skin dataset \cite{bhatt2010skin}} is collected by randomly sampling B,G,R values from face images of various age groups (young, middle, and old), race groups (white, black, and asian), and genders obtained from the FERET  \cite{phillips1998feret} and PAL databases.

The UCI \textbf{Seed dataset} includes group comprised kernels belonging to three different varieties of wheat: Kama, Rosa and Canadian, 70 elements each, randomly selected for
the experiment.

The UCI \textbf{Glass dataset} contains the description of 214 fragments of glass originally collected for a study in the context of criminal investigation. Each fragment has a measured reflectivity index and chemical composition (weight percent of Na, Mg, Al, Si, K, Ca, Ba and Fe).
    
The IPUMS \textbf{1940 Census} dataset is a sample drawn uniformly at random from the U.S. population, taken from the 1940 Census.
The Adult dataset, and most other IPUMS microdata sets, are stratified datasets intended to be used with weighted values, and as such are not representative of real populations when used as unweighted values. Using with weighted values poses additional difficulties for computing sensitivity that are beyond the scope of this paper. When used as training data for machine learning, they are not representative of performance on real populations.
The 1940 Census data is a uniform sample of the population, and as such is an appropriate representation for a machine learning task.\footnote{The IPUMS data is available at \url{https://usa.ipums.org/usa/1940CensusDASTestData.shtml .}}
We used the 13 attributes that are included in the adult dataset, and construct a binary classification task to predict whether the income of an individual is less than or equal to the mean income of the population, a similar prediction task to that used with the Adult dataset (although with a different threshold, due to inflation between 1940 and 1994.)  We discarded individuals with unknown values.  To give an idea of the variance across different subpopulations, we report values for different U.S. States.
\cref{dataset_info} shows the detailed description of the datasets that we have used for this experiment. 
\begin{table}[ht]
\centering
\begin{tabular}{lccc}
\hline
Dataset  & \multicolumn{1}{l}{No. of Records} & \multicolumn{1}{l}{Attributes} & \multicolumn{1}{l}{Classes} \\ \hline
Adult    & 48K                                & 14                        & 2                         \\
Mushroom & 8K                                 & 22                        & 2                         \\
Seed     & 210                                & 7                         & 3                         \\
Skin     & 245K                               & 3                         & 2                         \\
Glass     & 214                               & 9                         & 7                         \\
Wyoming  & 250K                               & 13                        & 2                         \\
Nevada   & 110K                               & 13                        & 2                         \\
Washington   & 1.7M                                 & 13                        & 2                         \\ 
Oregon   & 1M                                 & 13                        & 2                         \\ \hline
\end{tabular}
\caption{Description of the datasets used for our experiment.}
\label{dataset_info}
\end{table}

\subsection{Experimental Results}
Since there is randomness in our algorithm for adding noise, we have run all algorithms five iterations with 10-fold cross-validation; we show mean and error bars across the iterations. We use two baselines:  a standard \nb classifier and the constant ``predict the majority class'' classifier.  Probably the most interesting comparison is with the Differentially Private classifier by Vaidya, Shafiq, Basu, and Hong \cite{vaidya2013differentially} and Bun and Steinke's mean estimation algorithm \cite{bun2019average} as this shows the specific gains achieved through using smooth sensitivity rather than global sensitivity. 

\begin{figure}
\begin{minipage}{.48\textwidth}
\centering
	\includegraphics[width=\linewidth, height=6cm]{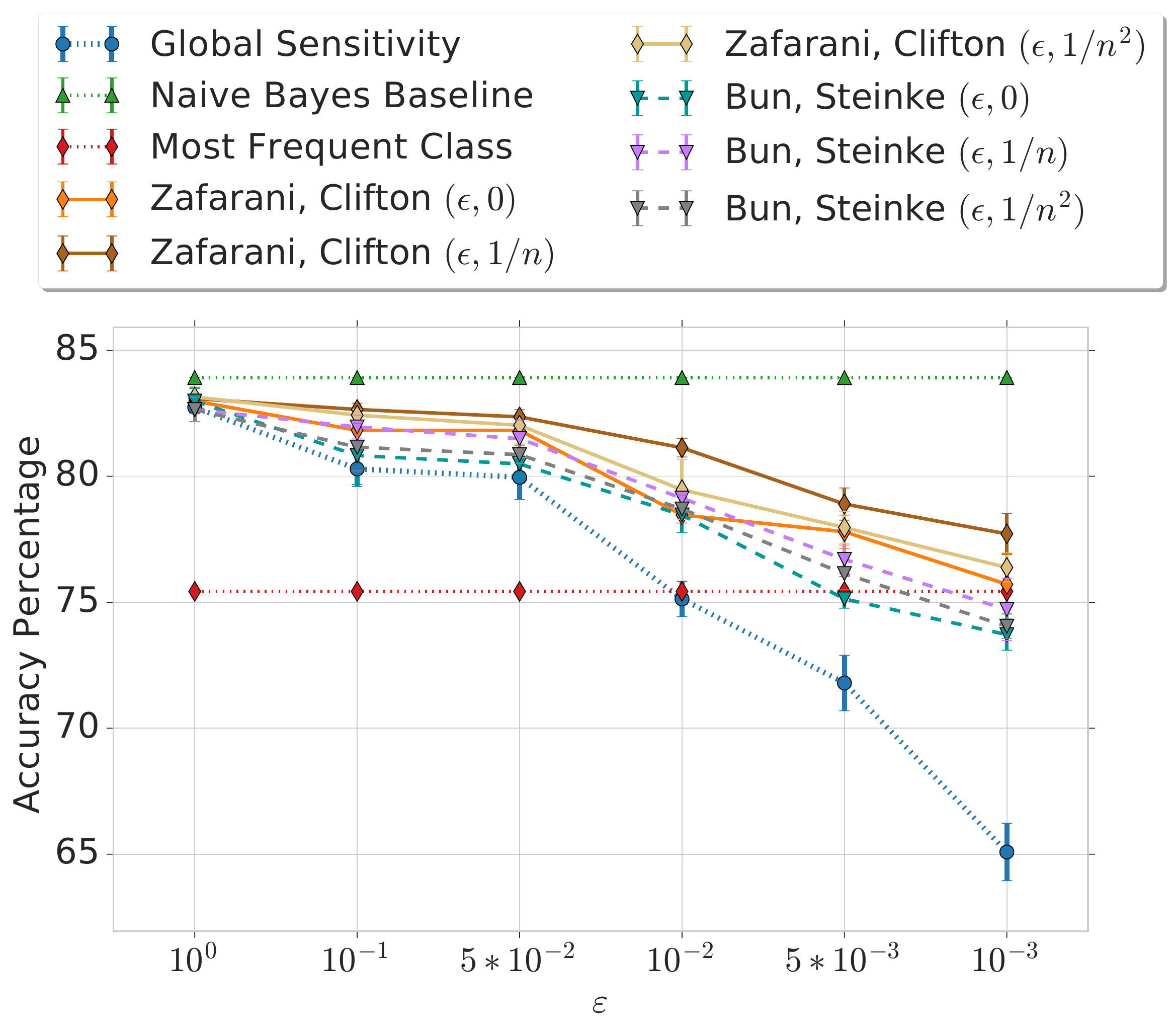}
	\caption{Accuracy at various values of $\varepsilon$: Adult dataset}
	\label{fig:adult}
\end{minipage}
    \par\bigskip
\begin{minipage}{.48\textwidth}

\centering
\includegraphics[width=\linewidth, height=6cm]{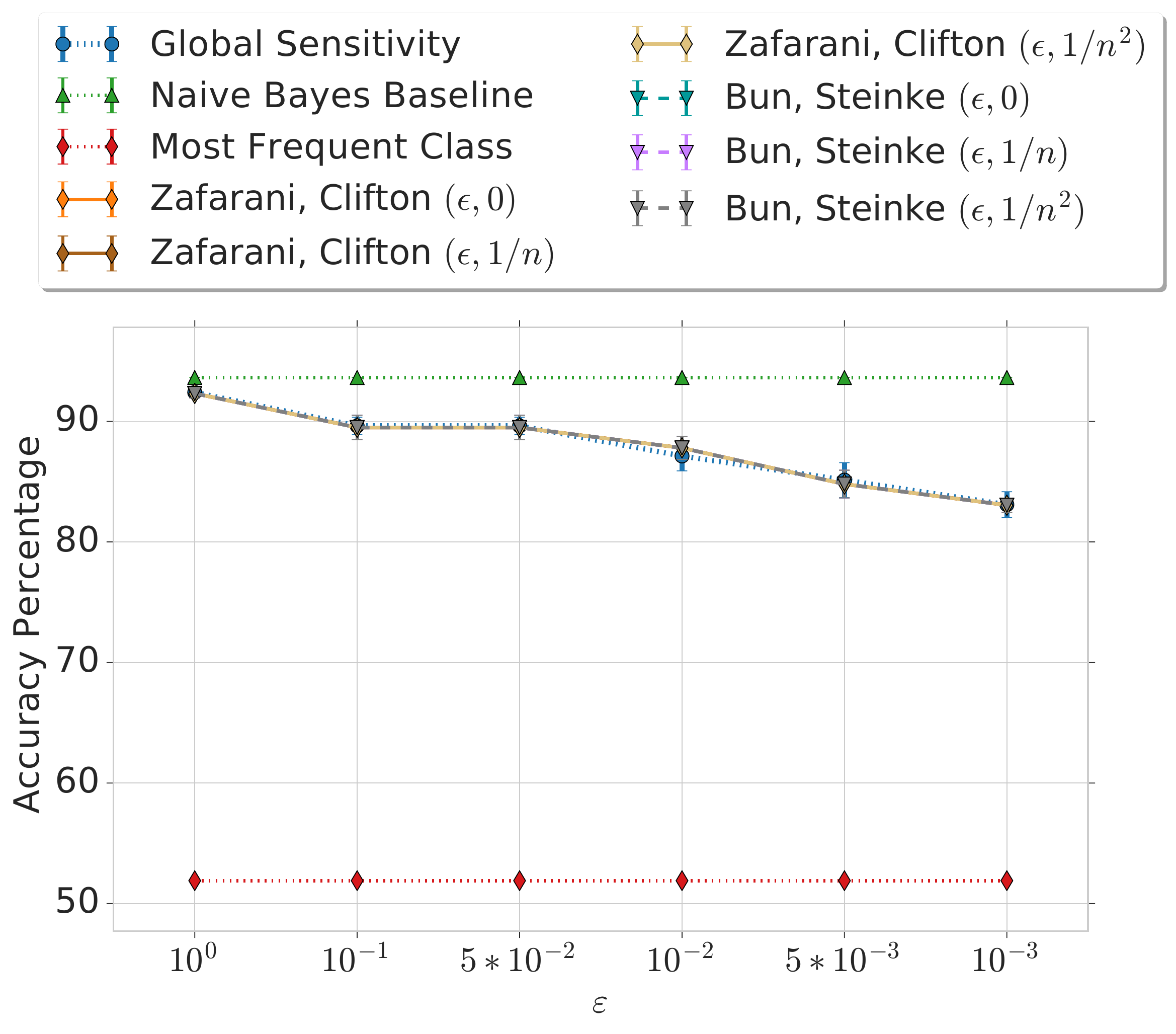}
    \caption{Accuracy at various values of $\varepsilon$: Mushroom dataset}
    \label{fig:mushroom}
\end{minipage}
\par\bigskip
\begin{minipage}{.48\textwidth}

    \centering
    \includegraphics[width=\linewidth, height=6cm]{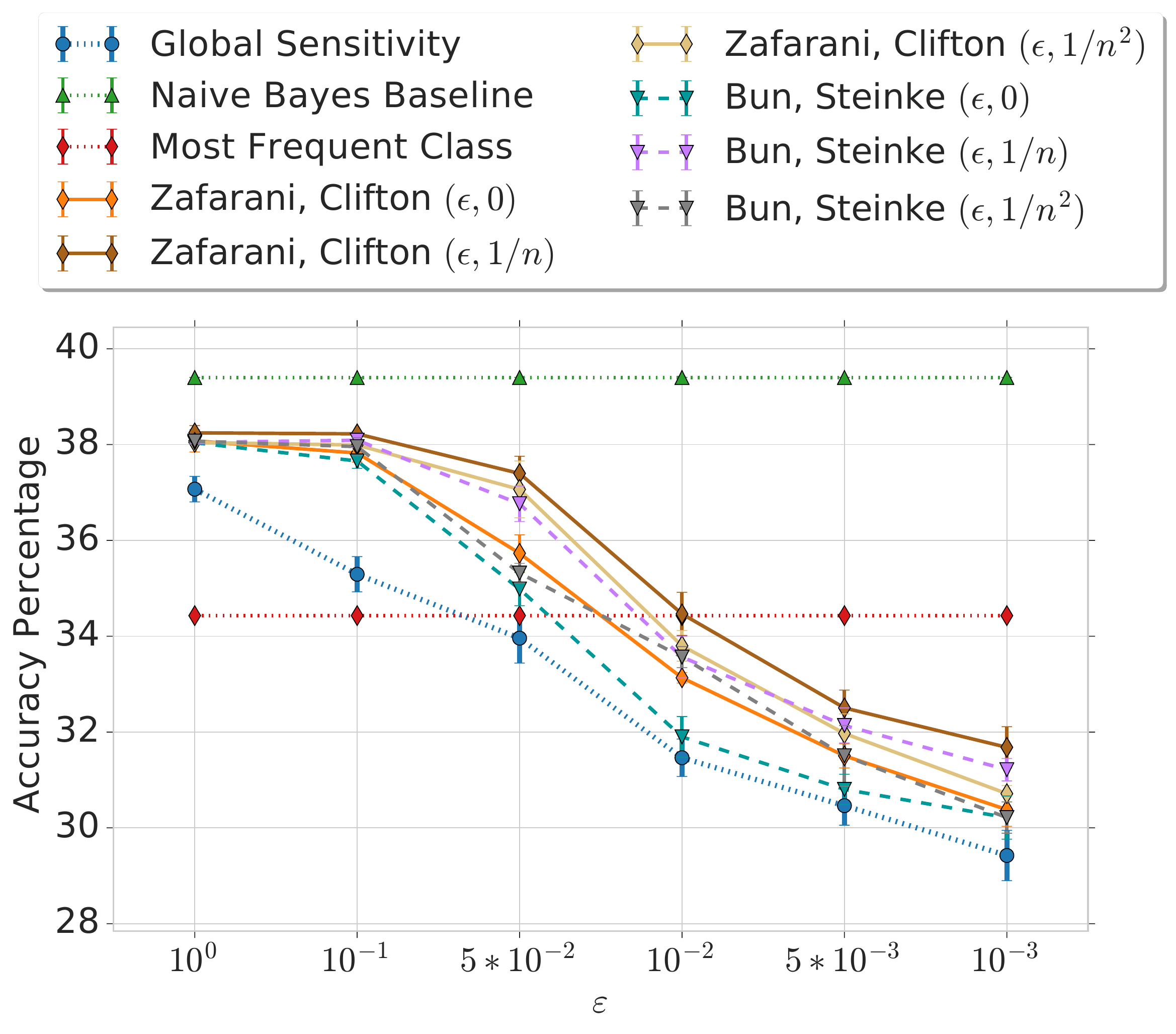}
    \caption{Accuracy at various values of $\varepsilon$: Glass dataset}
    \label{fig:glass}
    \end{minipage}
\end{figure}

Note that \cite{vaidya2013differentially} reports results in terms of the privacy budget used for each attribute; we instead report the total privacy budget utilized under sequential composition (\cref{thm_comp}).  This is simply a scaling of $\varepsilon$ and does not fundamentally change their reported results. 

Results are shown in Figures \ref{fig:adult}-\ref{fig:washington}.  We give the mean value across multiple draws from the differentially privacy mechanisms, as well as standard deviation.
\begin{figure}
\begin{minipage}{.48\textwidth}
\centering
    \includegraphics[width=\linewidth, height=6cm]{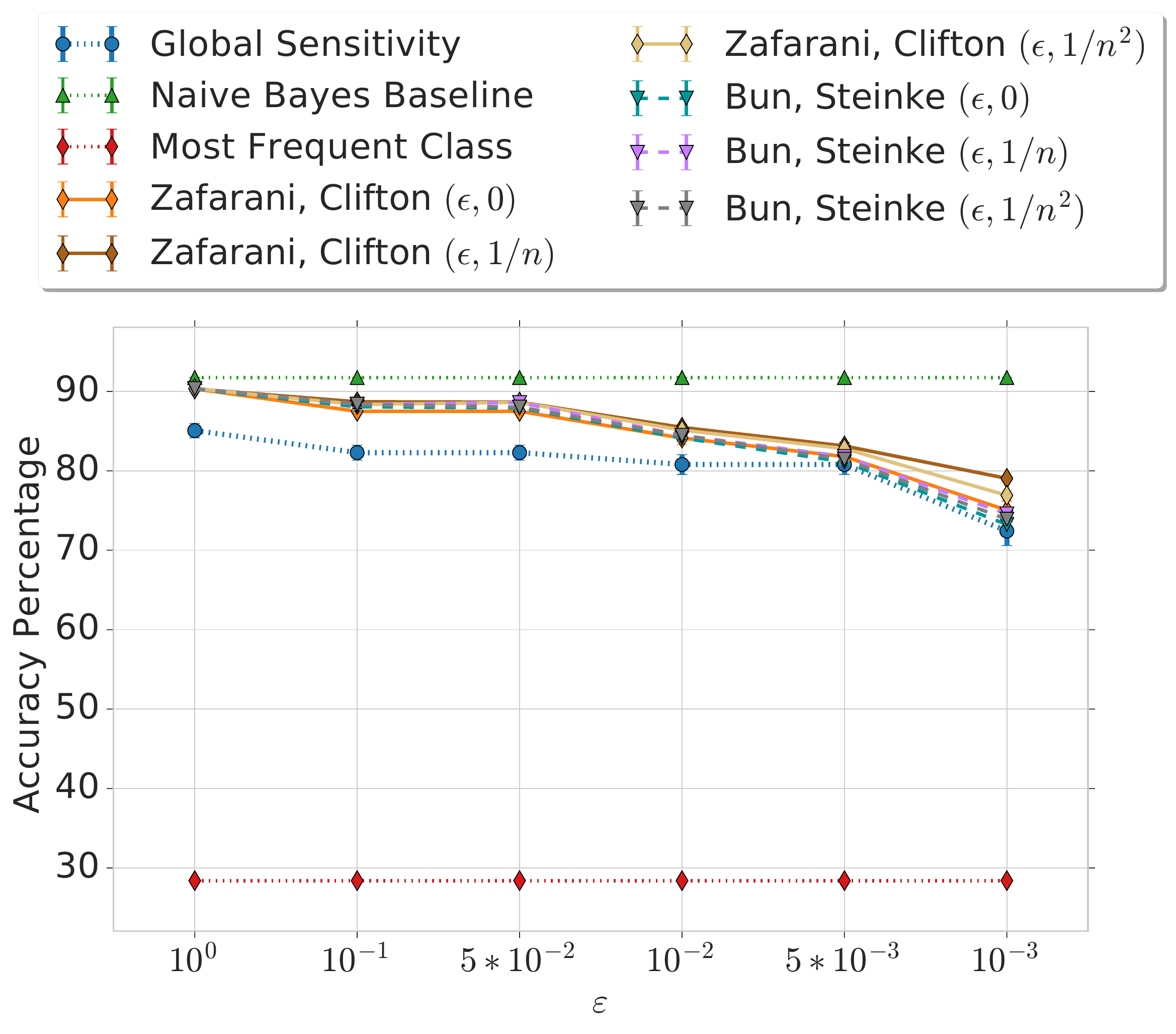}
    \caption{Accuracy at various values of $\varepsilon$: Seed dataset}
    \label{fig:seed}
 \end{minipage}
    \par\bigskip
 \begin{minipage}{.48\textwidth}
 \centering
    \includegraphics[width=\linewidth, height=6cm]{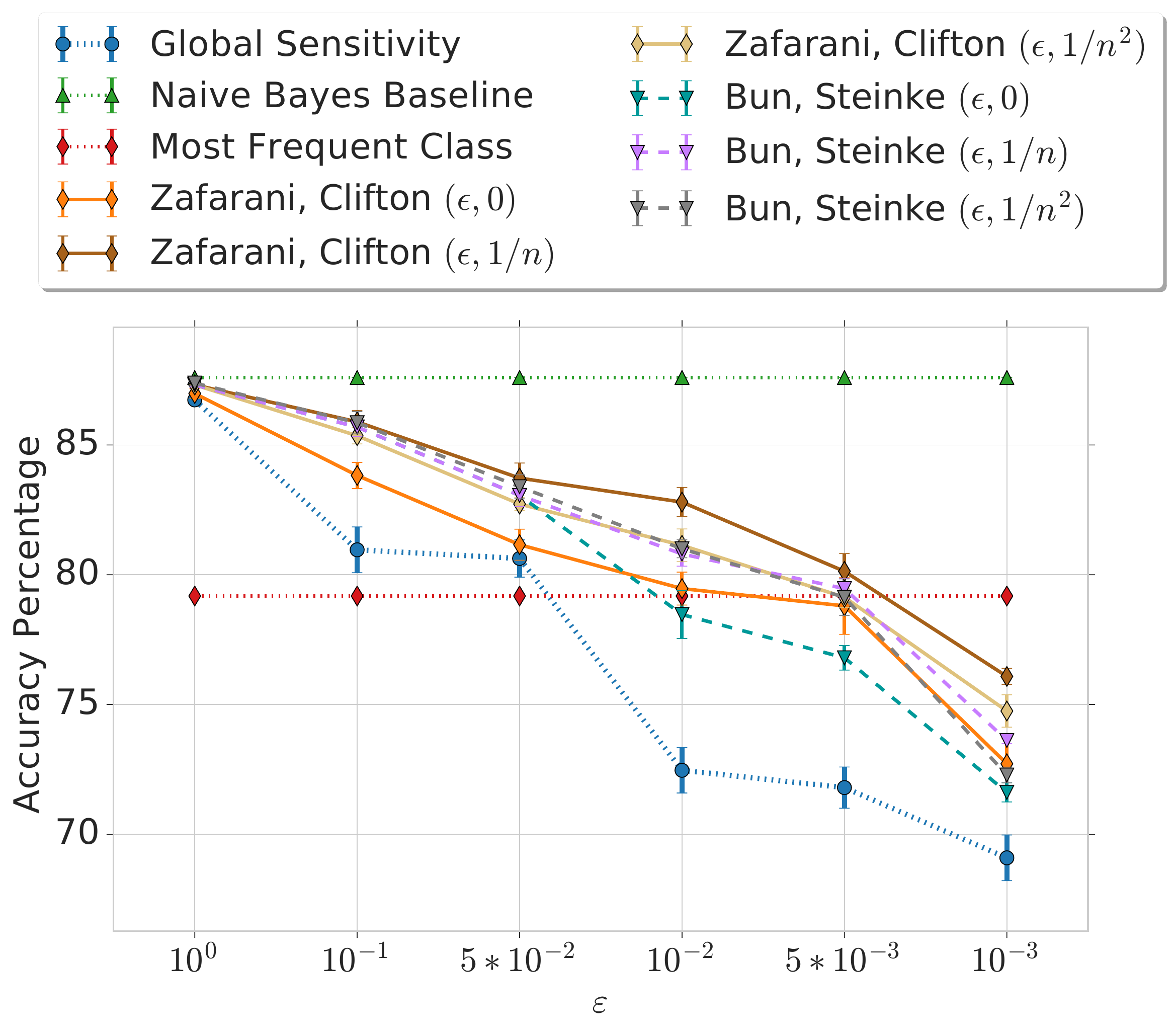}
    \caption{Accuracy at various values of $\varepsilon$: Skin dataset}
    \label{fig:skin}
 \end{minipage}
    \par\bigskip
 \begin{minipage}{.48\textwidth}
 \centering
    \includegraphics[width=\linewidth, height=6cm]{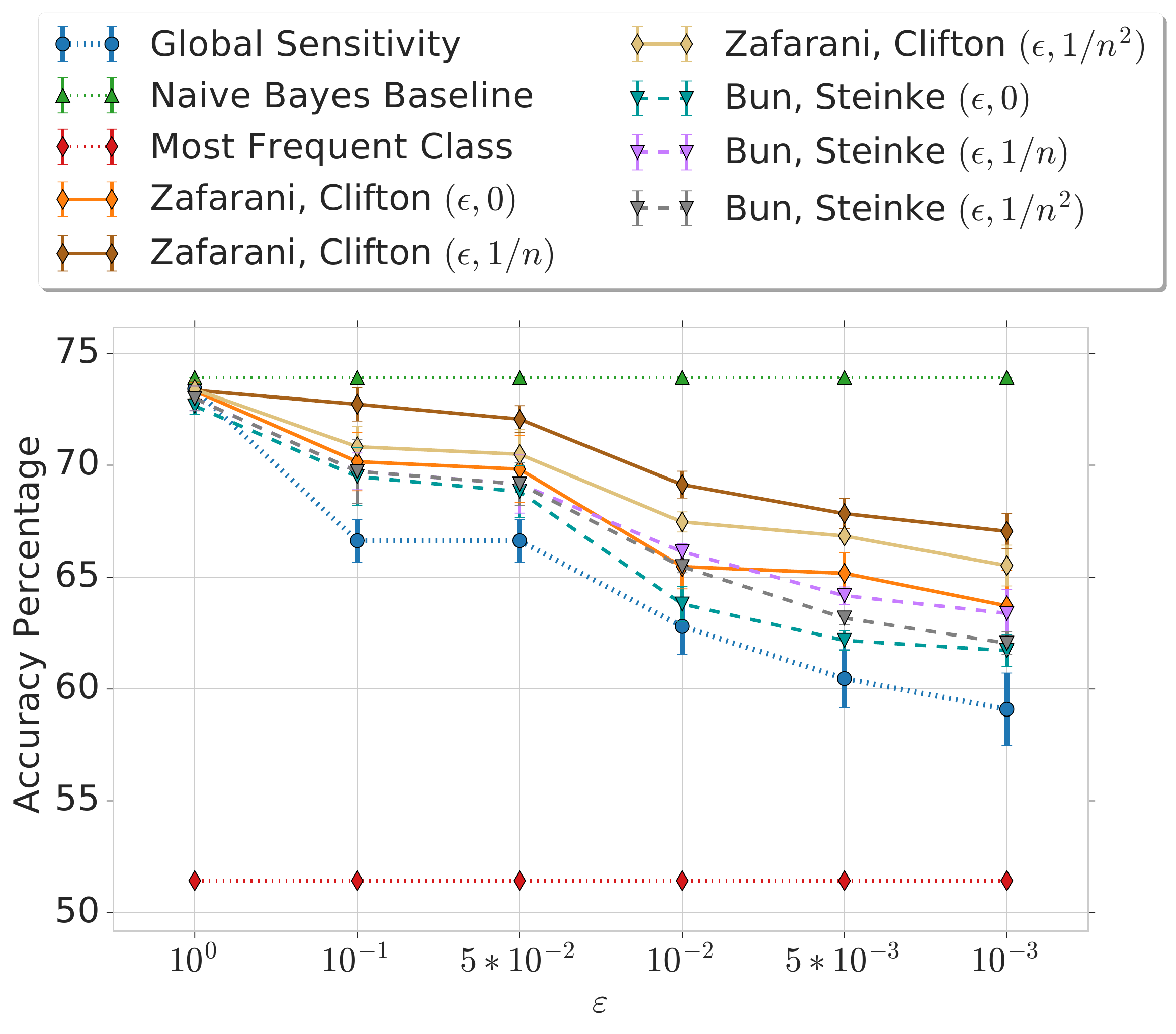}
    \caption{Accuracy at various values of $\varepsilon$: Wyoming from 1940 Census}
    \label{fig:wyoming}
\end{minipage}
\end{figure}
\begin{figure}
\centering
\begin{minipage}{.48\textwidth}
    \includegraphics[width=\linewidth, height=6cm]{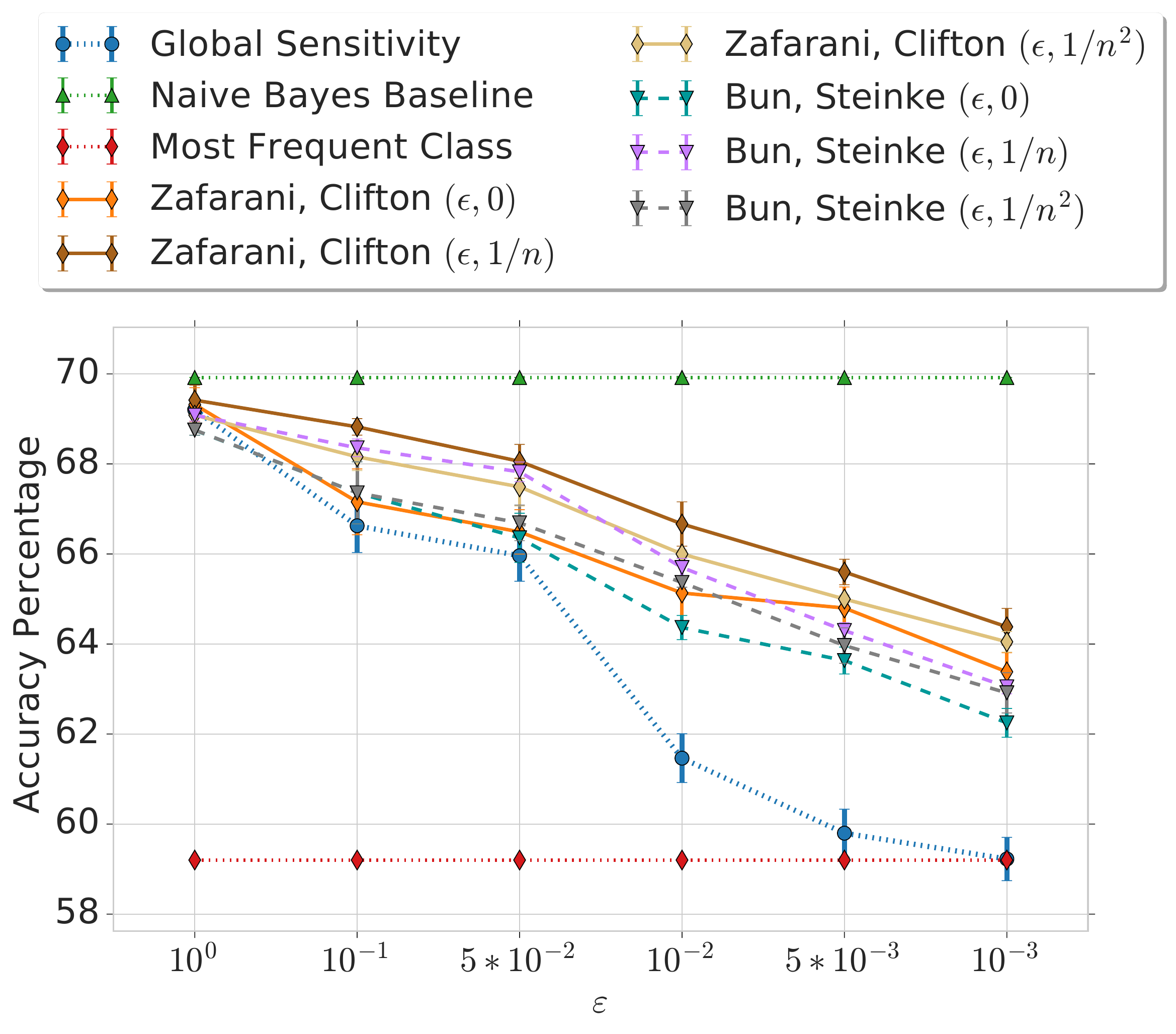}
    \caption{Accuracy at various values of $\varepsilon$: Nevada from 1940 Census}
    \label{fig:nevada}
    \end{minipage}
    \par\bigskip
    \begin{minipage}{.48\textwidth}
   \centering
    \includegraphics[width=\linewidth, height=6cm]{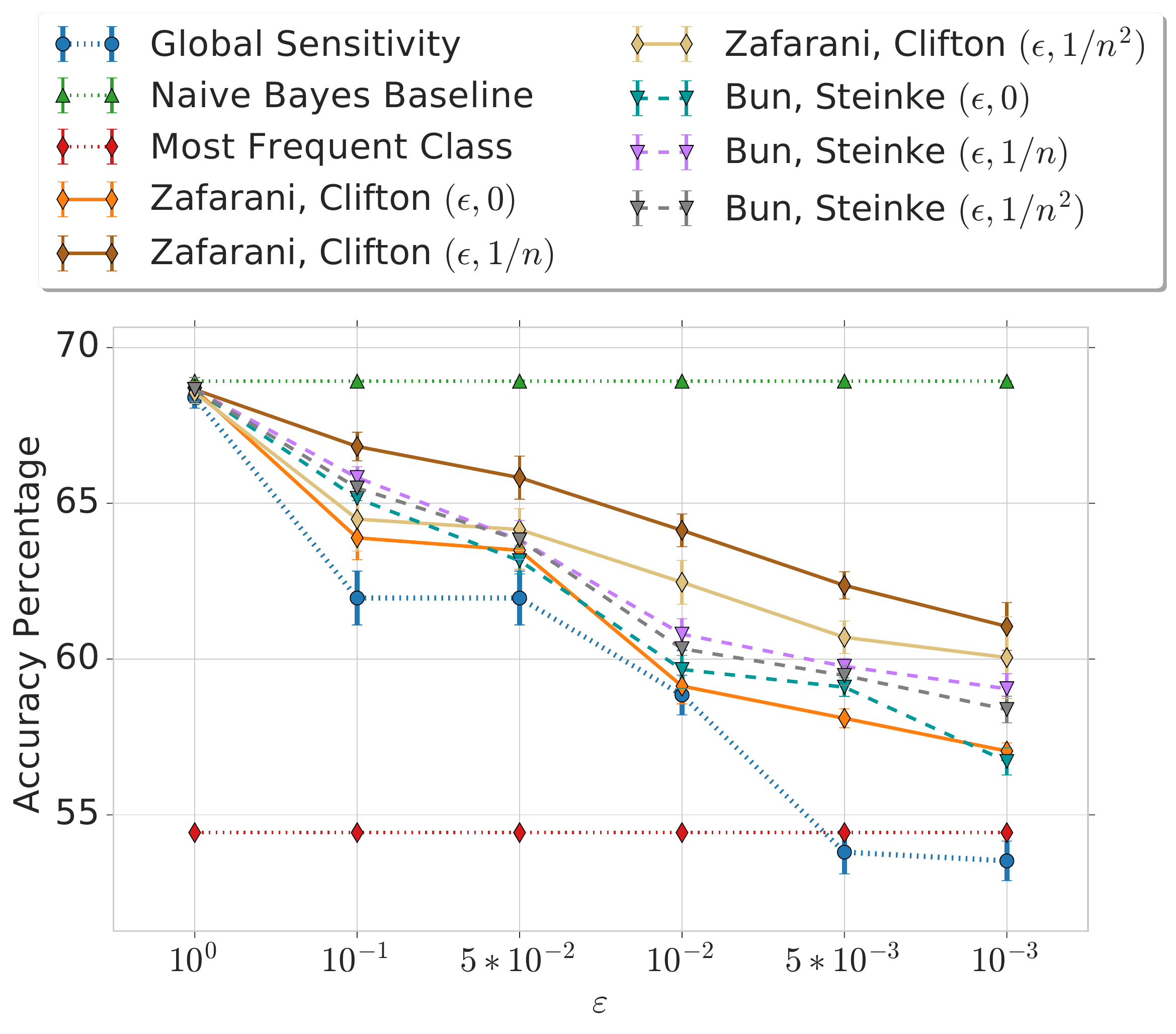}
    \caption{Accuracy at various values of $\varepsilon$: Oregon from 1940 Census}
    \label{fig:oregon}
\end{minipage}
    \par\bigskip
\begin{minipage}{.48\textwidth}
   \includegraphics[width=\linewidth, height=6cm]{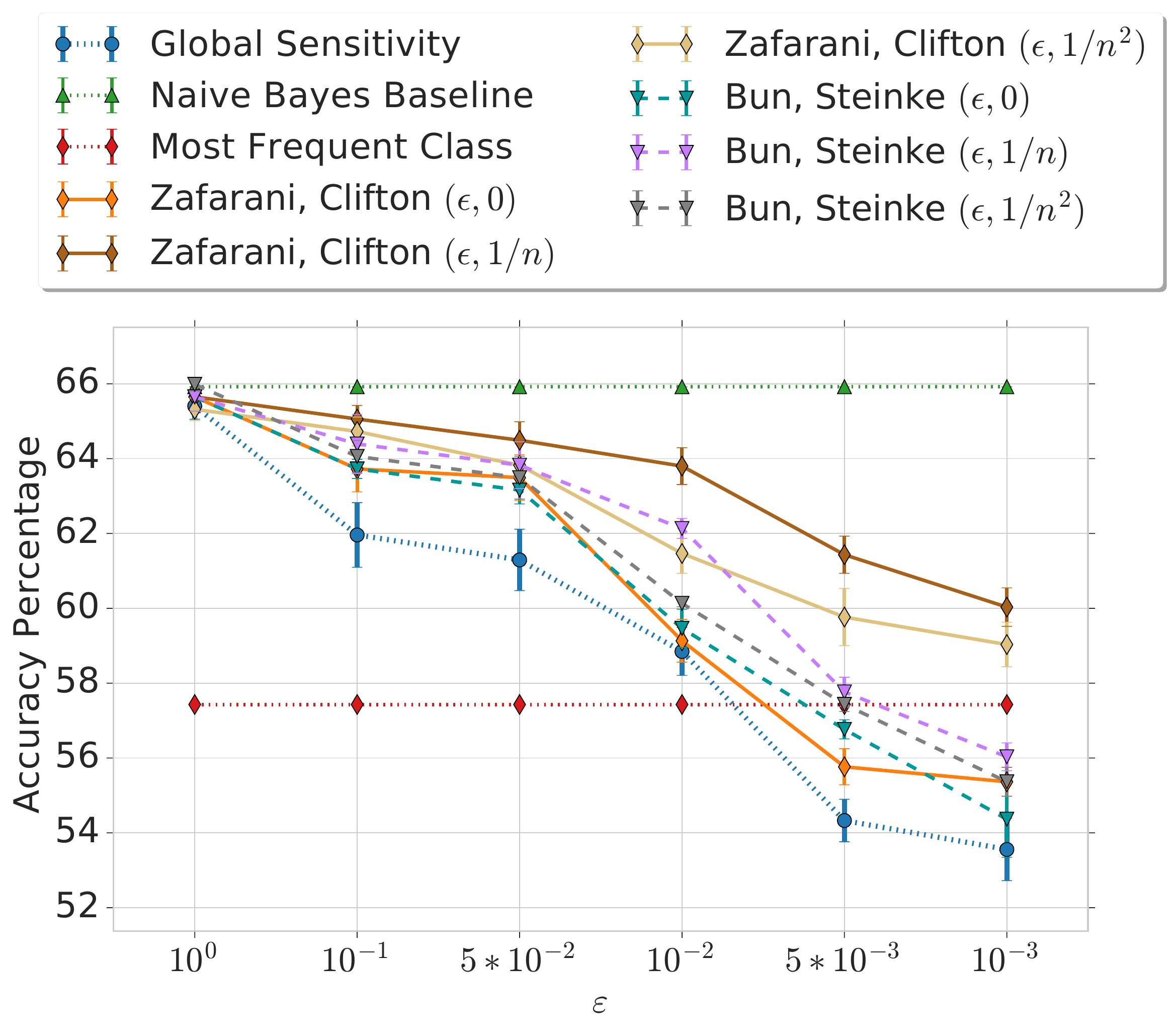}
    \caption{Accuracy at various values of $\varepsilon$: Washington from 1940 Census}
    \label{fig:washington}
\end{minipage}
\end{figure}
Smooth sensitivity gives significant improvements in classifier accuracy, particularly for the datasets using human data.
For completeness we also show results for $(\varepsilon, \delta)$-differential privacy \cite{mcsherry2009differentially}, a weaker form of differential privacy, for $\delta=1/n$ and $1/n^2$.  (For clarity, algorithm \ref{alg:privnaive} shows only $(\varepsilon,0)$-differential privacy; the extension to $(\varepsilon,\delta)$ is straightforward.)
As shown by Nissim et al. \cite{nissim2007smooth}, if we add Laplace or Gaussian noise of magnitude calibrated to the smooth sensitivity, this would give approximate differential privacy. In this work, to achieve approximate differential privacy, instead of using Cauchy distribution in \cref{alg:privnaive}, we have used Gaussian distribution.
As it is shown in Figures \ref{fig:adult}-\ref{fig:washington}, approximate-DP does give some accuracy improvement, but at a cost of a weaker privacy guarantee.

Since we only use smooth sensitivity with continuous values and the Mushroom dataset contains only categorical values, the results in Figure \ref{fig:mushroom} are the same for all methods; we include for completeness with \cite{vaidya2013differentially} and to show variance across parameter values.

\subsection{Computational Costs}\label{sec:compcost}
As we showed in Theorem \ref{thm_comp}, smooth sensitivity does not come for free.  The global sensitivity algorithm is $O(n \log n)$, smooth sensitivity takes us to $O(n^2)$.  To demonstrate what this means in practical terms, we show how results change as we vary the dataset size.  \cref{fig:runtime} compares the two approaches, showing
runtimes of Python implementations on a computer with an 2.2 GHz Intel Core i7 CPU, 64 GB 1600 MHz DDR3 RAM, running Ubuntu 18.04. While we do see a substantial runtime cost in training the \nb classifier, these are reasonable times for many practical applications.  It is interesting to note that as the dataset size increases, factors other than sensitivity calculation become increasingly important; the nearly order of magnitude difference in cost with a 5000 instance dataset drops to half an order of magnitude (although still a substantial time difference) with over a million instances.

\begin{figure}
    \centering
    \includegraphics[width=\linewidth, height=6cm]{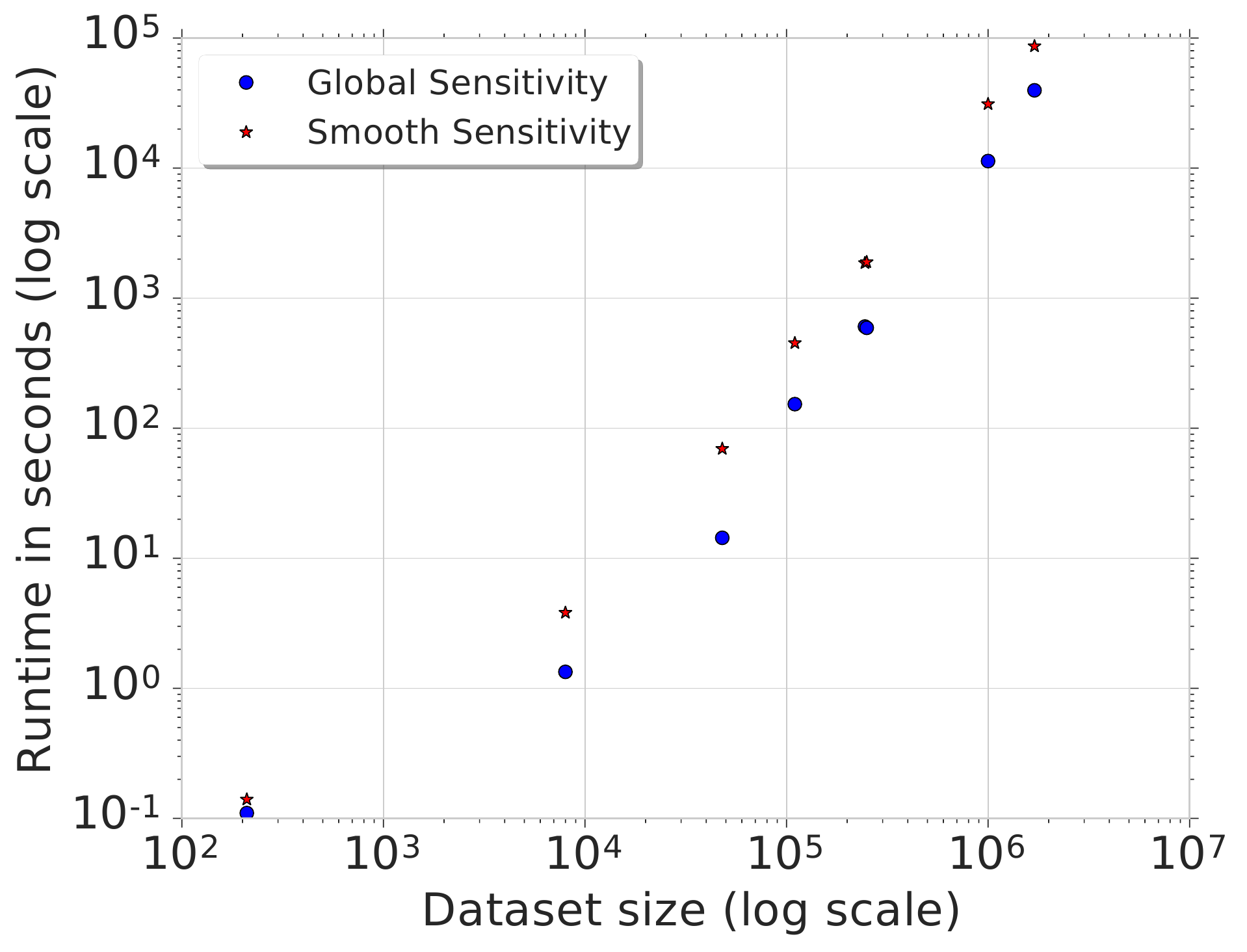}
    \caption{Runtime Analysis on datasets provided in Table \ref{dataset_info}, blue dot represents the Global Sensitivity training time while the red star represent the corresponding Smooth sensitivity training time in seconds ($\log \log$ scale).}
    \label{fig:runtime}
\end{figure}

\section{Related Work}
We have discussed the work of Vaidya et al. \cite{vaidya2013differentially},
which addresses the same problem we do using global sensitivity.  Li et al. \cite{li2018differentially} proposed a new model for a differentially private \nb classifier over multiple data sources. Their proposed method enables a trainer to train a \nb classifier over the dataset provided jointly by different data owners, without requiring a trusted aggregator as in our work and \cite{vaidya2013differentially}. Yilmaz et al. \cite{yilmaz2019locally} provided a differentially private \nb classifier under the local differential privacy setting. With local differential privacy, individuals perturb their data before sending to an \emph{untrusted} aggregator.
The stronger adversary model of these two approaches (eliminating the trusted aggregator) results in significantly more noise and reduced accuracy.

There have been studies of privacy-preserving \nb under different privacy models. Kantarcioglu et al. \cite{kantarcioglu2003privacy} proposed a privacy-preserving \nb classifier for horizontally partitioned data. Their solution uses secure summation and logarithm to learn a distributed \nb classifier securely. Vaidya and Clifton \cite{vaidya2004privacy} gave a solution to the same problem but for vertically partitioned data under the semi-honest model.  These approaches protect the data during training, an orthogonal problem to differential privacy's protection of disclosure via the learned model.

Making machine learning models differentially private has had more general interest, with solutions proposed for several machine learning approaches. Some of the more well known include Jagannathan, Pillaipakkamnatt and Wright \cite{jagannathan2009practical} that gives a differentially private algorithm for random decision trees, and Abadi et al.  \cite{abadi2016deep} that give a differentially private framework for deep learning models.  One possible area for future work is to determine if these approaches could benefit from using smooth sensitivity rather than global sensitivity, and if so, how that might be tractably computed.

Beyond machine learning,
there are many differentially private algorithms for statistical tests. Campbell et al. \cite{campbell2018differentially} gave a differentially private ANOVA test. Task and Clifton \cite{task2016differentially} provided differentially private significance testing on paired-sample data. 
In \cref{fig:diffbudget}, we show that a non-uniform allocation of privacy budget to different attributes can have an impact; we discuss this further in the Appendix.  This is not a trivial problem in the context of differential privacy.
Anandan and Clifton \cite{anandan2018differentially} provided a differentially private solution for a more basic version of this problem, feature selection for data mining tasks. In their work they analyze the sensitivity of various feature selection techniques used in data mining and show that some of them are not suitable for differentially private analysis due to high sensitivity.

\section{Conclusion}
In this paper, we have developed a differentially private \nb Classifier using Smooth Sensitivity for numerical data, along with global sensitivity for categorical values. For fitting numerical values, we have made the assumption typically used with \nb that the data follows a Gaussian distribution. When the features are bounded, we assume that the underlying data follows a truncated normal distribution. We have computed the smooth sensitivity of the parameters of the Gaussian, $\mu$ and $\sigma$. To obtain the $\varepsilon$-differential private algorithm, we have added noise proportional to the smooth sensitivity of the parameters. Previous work on \nb differential private classifier done by Vaidya, Shafiq, Basu, and Hong \cite{vaidya2013differentially} perturb the parameters of the \nb classifier by a noise that is scaled to the global sensitivity of the parameters. We demonstrate on real-world datasets that our method achieves a significant accuracy improvement.
While this comes at a computational cost, it is a cost only in model training. 
The released model is essentially identical to \cite{vaidya2013differentially}, but with higher accuracy, and still satisfying the $\varepsilon$-differential privacy definition.
Smooth sensitivity provides the same differential privacy guarantee as global sensitivity for given values of $\varepsilon$ and $\delta$, including $(\varepsilon, 0)$.  The greater the impact of numerical features on the result, the greater the benefit of smooth sensitivity.
We have also compared our result with private mean estimation of Bun and Steinke \cite{bun2019average} where they estimate the mean of an unknown distribution based on an i.i.d. sample from the dataset, adding noise proportional to the smooth sensitivity of the truncated mean.

\section*{Acknowledgements}
This research received no specific grant from any funding agency in the public, commercial, or not-for-profit sectors.

\bibliographystyle{ieeetr}
\bibliography{main}

\section{Appendix}
\subsection{Varying Allocation of the Privacy Budget}\label{sec:diffbudget}
\begin{figure}
    \includegraphics[width=\linewidth, height=6cm]{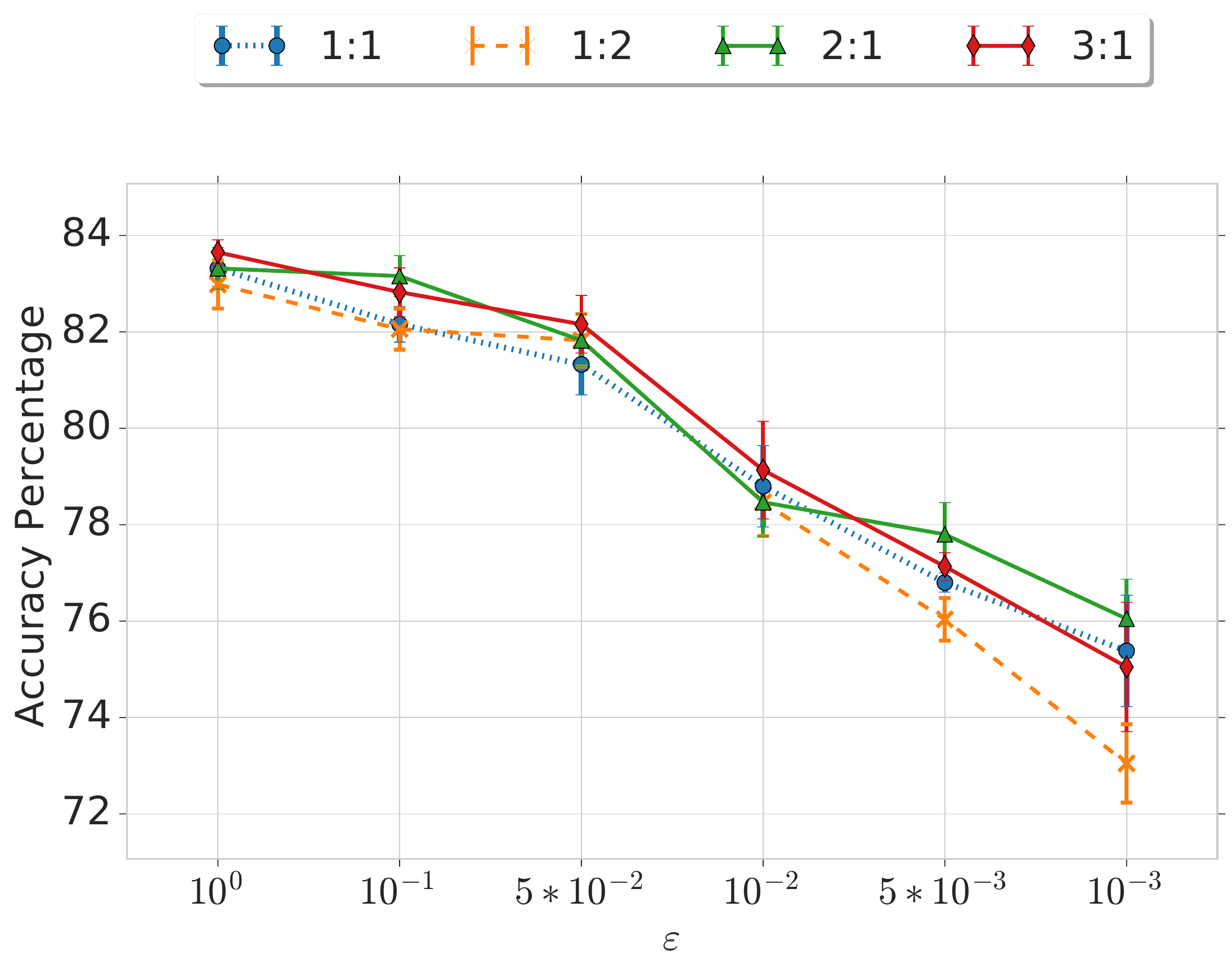}
    \caption{Adult dataset: Accuracy at various values of $\varepsilon$ for varying allocation of privacy budget - numerical:categorical}
    \label{fig:diffbudget}
\end{figure}

We briefly experimented with varying the allocation of the privacy budget between categorical and numeric variables, an example is shown in \cref{fig:diffbudget}.  With the exception of substantially overweighting categorical variables and giving little privacy budget (high noise) for numeric (line 1:2), the results are inconclusive; the differences are well within the error seen across cross-validation groups.
This suggests that numeric attributes are important to this classification problem, and due to the higher sensitivity, giving too little privacy budget to them limits their value.

\cref{alg:privnaive} and all other figures correspond to the ``2:1'' allocation in  \cref{fig:diffbudget}:  $2*\varepsilon'$ allocated to each numeric attribute, one for the mean estimation, and one for standard deviation.

Further research would be needed to establish reasoning to apply additional weight to numeric or categorical variables.  Making the selection empirically before applying differential privacy would constitute a disclosure of information, and those violate the provided $\varepsilon-$differential privacy guarantee.
(Our choice of a ``2:1'' allocation outside of \cref{fig:diffbudget} was based on the need to need to gather both mean and standard deviation, rather than based on empirical analysis.)
Properly arriving at a weighting in a way that is both private and effective is an interesting problem, and one that comes up generally in differentially private machine learning.

To measure the effect of privacy budget on numerical and categorical attributes we further experimented on a synthetic dataset. We generated two synthetic datasets of 10000 entries with 5 categorical attributes and 5 numerical. In the first dataset, the numerical attributes are correlated with the label while the categorical values are uncorrelated. In the second dataset, the categorical attributes are correlated with the label while numerical attributes are uncorrelated. The impact of different distributions of privacy budget are presented in  \cref{num-cor} and \cref{cat-cor}.  Unsurprisingly, the optimal privacy budget distribution is dataset dependent.

\begin{figure}
    \centering
    \includegraphics[width=\linewidth, height=6cm]{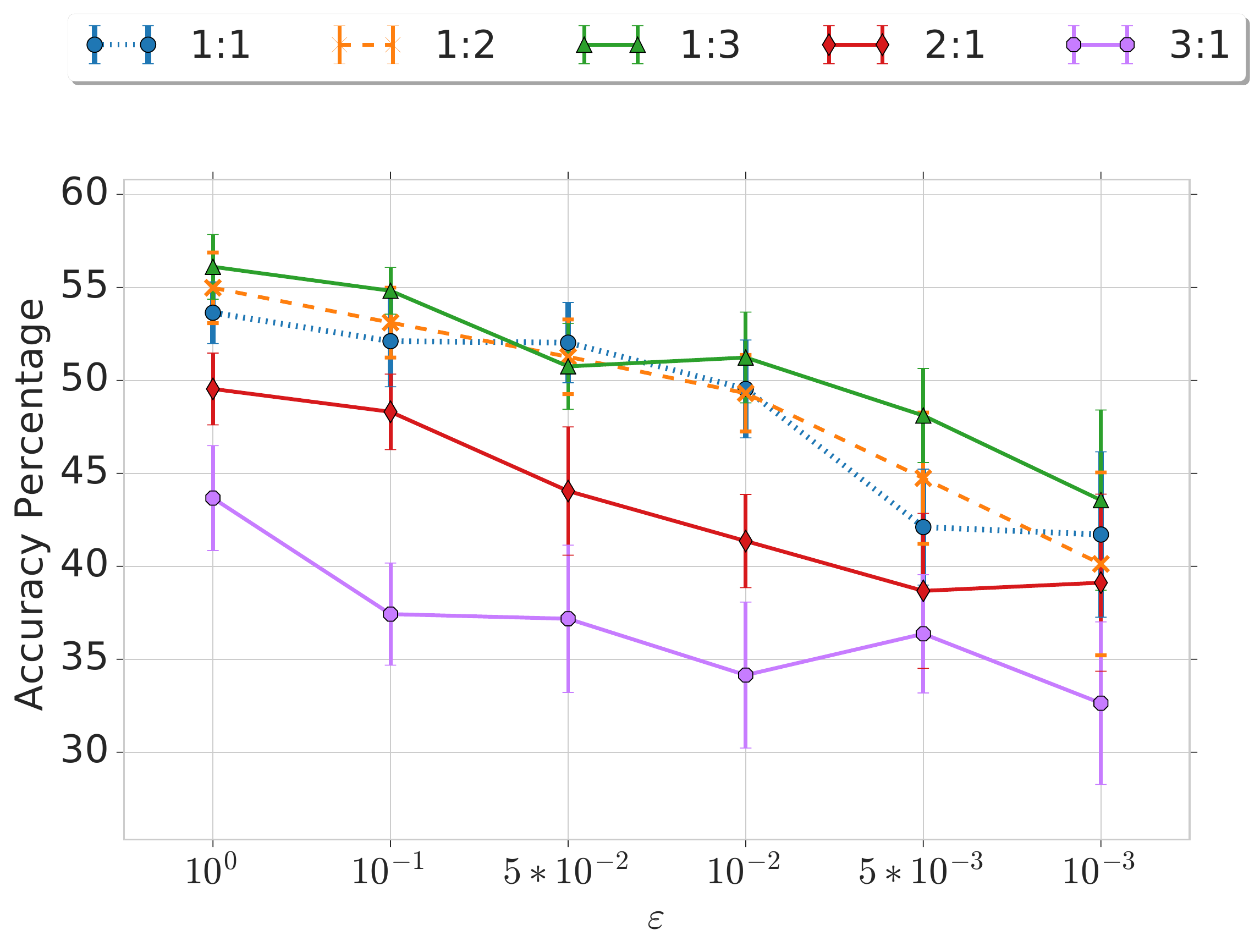}
    \caption{Accuracy at various values of $\varepsilon$ for varying allocation of privacy budget - numerical:categorical on synthetic dataset where only the categorical attributes are correlated with the class.
    }
    \label{cat-cor}
\end{figure}

\begin{figure}
    \centering
    \includegraphics[width=\linewidth, height=6cm]{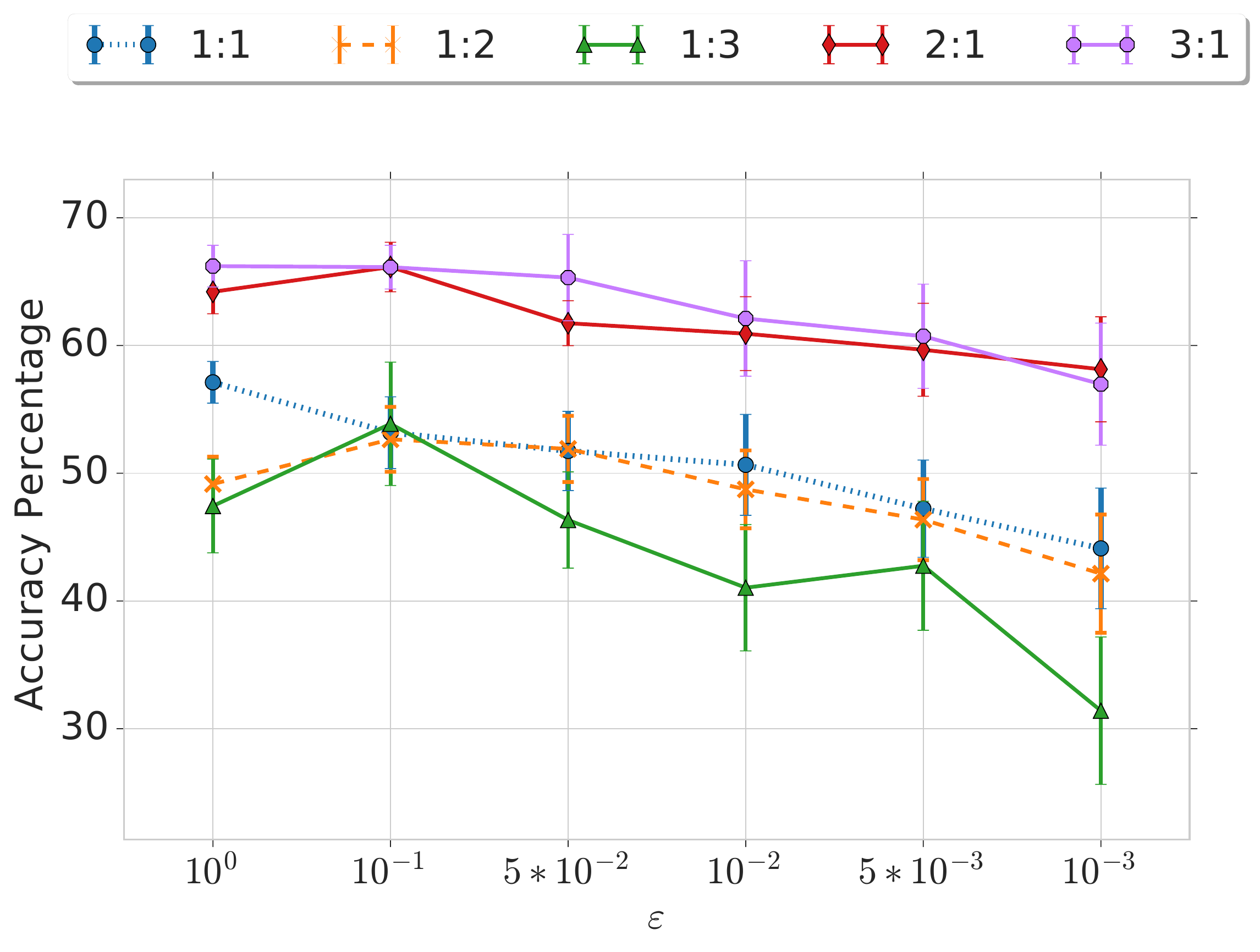}
    \caption{Accuracy at various values of $\varepsilon$ for varying allocation of privacy budget - numerical:categorical on synthetic data where only the numerical attributes are correlated with the class.
    }
    \label{num-cor}
\end{figure}
The unbalanced weight helps numerical more than it hurts categorical, particularly at higher privacy values.  This supports our conjecture that a 2:1 split would be appropriate since we estimate more values for numerical attributes.

\end{document}